\newtheorem{theorem}{Theorem}
\newtheorem{definition}{Definition}
\newtheorem{remark}{Remark}
\begin{document}
%
\title{GriT-DBSCAN: A Spatial Clustering Algorithm for Very Large Databases}

%
%

\author{Xiaogang~Huang,
        Tiefeng~Ma,
        Conan~Liu,
        and Shuangzhe~Liu
\IEEEcompsocitemizethanks{\IEEEcompsocthanksitem
Xiaogang~Huang and Tiefeng~Ma are with School of Statistics, Southwestern University of Finance and
Economics, Chengdu 611130, China.\protect\\
E-mail: 1200202J8001@smail.swufe.edu.cn and matiefeng@swufe.edu.cn
\IEEEcompsocthanksitem
Conan~Liu is with UNSW Business School, University of New South Wales, Sydney, NSW 2052, Australia.\protect\\
E-mail: conan.liu88@gmail.com
\IEEEcompsocthanksitem
Shuangzhe~Liu is with Faculty of Science and Technology, University of Canberra, Canberra, ACT 2601, Australia.\protect\\
E-mail: Shuangzhe.Liu@canberra.edu.au
}

\thanks{Manuscript received April 6, 2022.}}

%
%

\markboth{Journal of \LaTeX\ Class Files}%
{Shell \MakeLowercase{\textit{et al.}}: A Linear Time Algorithm for Spatial Clustering}
%



\IEEEtitleabstractindextext{%
\begin{abstract}
\justifying DBSCAN is a fundamental spatial clustering algorithm with numerous practical applications.
However, a bottleneck of the algorithm is in the worst case, the run time complexity is $O(n^2)$.
To address this limitation, we propose a new grid-based algorithm for exact DBSCAN in Euclidean space called GriT-DBSCAN, which is based on the following two techniques.
First, we introduce a grid tree to organize the non-empty grids for the purpose of efficient non-empty neighboring grids queries.
Second, by utilising the spatial relationships among points, we propose a technique that iteratively prunes unnecessary distance calculations when determining whether the minimum distance between two sets is less than or equal to a certain threshold.
We theoretically prove that the complexity of GriT-DBSCAN is linear to the data set size.
In addition, we obtain two variants of GriT-DBSCAN by incorporating heuristics, or by combining the second technique with an existing algorithm.
Experiments are conducted on both synthetic and real-world data sets to evaluate the efficiency of GriT-DBSCAN and its variants.
The results of our analyses show that our algorithms outperform existing algorithms.
\end{abstract}

\begin{IEEEkeywords}
DBSCAN, clustering, indexing methods, spatial databases.
\end{IEEEkeywords}}

\maketitle

\IEEEdisplaynontitleabstractindextext

%
\IEEEpeerreviewmaketitle

\IEEEraisesectionheading{\section{Introduction}\label{sec:introduction}}
\IEEEPARstart{S}{patial} clustering is a fundamental technique in data analysis, and has extensive applications in data mining \cite{larose2014discovering}, information retrieval \cite{bhatia1998conceptual}, image segmentation \cite{ROSENFELD19821}, pattern recognition \cite{prml, jain2000statistical}, etc.
Generally speaking, the objective of spatial clustering is to divide a given data set into several clusters, such that objects in the same cluster are homogeneous, and objects from different clusters are heterogeneous.
A large number of algorithms have been proposed for spatial clustering.

Among all the spatial clustering algorithms, DBSCAN \cite{ester1996density} is perhaps one of the most widely used algorithms since it can discover clusters of arbitrary shapes and noises.
However, the worst case running time complexity of DBSCAN is $O(n^2)$ \cite{gunawan2013faster}, regardless of the parameters $\epsilon$ (radius) and $MinPts$ (density threshold).
This is because DBSCAN uses pointwise $\epsilon$-neighborhood queries.
When all points are within an $\epsilon$-ball, the running time for all of these queries is $O(n^2)$.
Many improved algorithms have been proposed to reduce the complexity of DBSCAN.

In general, there are three main strategies to reduce the complexity of DBSCAN: grid-based, ball-based, and sampling-based strategies.
The grid-based algorithms, such as G13 introduced by Gunawan and de Berg \cite{gunawan2013faster}, use the grid structure to reduce the range query time by considering the neighboring grids.
They proved that the running time of G13 is $O(n\log n)$ for 2-dimensional data.
G13 was extended by Gan and Tao \cite{gan2015dbscan, gan2017hardness} to higher dimensions with sub-quadratic complexity,
and a linear time approximate algorithm was proposed.
On the other hand, the ball-based algorithms, such as NQ-DBSCAN \cite{chen2018fast}, use  $2\epsilon$-ball (the $d$-dimensional ball with radius $2\epsilon$) to perform local neighborhood searching so as to reduce the time of range query.
However, the ball-based algorithms require $O(n^2)$ time in the worst case.
Unlike the two strategies just mentioned, the sampling-based algorithms improve DBSCAN by reducing the number of range queries.
For example, IDBSCAN \cite{borah2004improved} expands the cluster by performing range queries on some representatives sampled inside a core point's neighborhood.
Although sampling-based algorithms are faster, their results may be inconsistent with those of DBSCAN.
To the best of our knowledge, there is no algorithm that can produce the same results as DBSCAN does, while exhibiting complexity that is linear to the number of data points.

In this paper, we propose a simple and efficient grid-based DBSCAN algorithm.
The main contributions of this paper are as follows:
\begin{itemize}
\item
We introduce a novel tree-like data structure, namely, grid tree, to organize the non-empty grids.
Using the grid tree, we propose an efficient non-empty neighboring grids query technique.
For each non-empty grid, the running time to find its non-empty neighboring grids is linear to the number of non-empty neighboring grids in the best case.
\item
We propose an efficient technique, namely, FastMerging, to determine whether the minimum distance between two sets is less than or equal to a certain threshold, which iteratively prunes unnecessary distance calculations by considering the spatial relationships among points.
We also show that the complexity of FastMerging is linear to the number of points.
\item
Based on the above two techniques, we propose a new grid-based DBSCAN algorithm called GriT-DBSCAN, whose complexity is linear to the number of points.
\item
We conduct extensive experiments to evaluate the performance of GriT-DBSCAN and its two variants as well.
The results show the superiority of our algorithm and its variants.
\end{itemize}

The rest of this paper is organized as follows.
Section \ref{related work} presents the related work.
Section \ref{preliminaries} revisits the existing algorithms that are related to our proposed algorithm.
Section \ref{proposed algorithm} presents our proposed algorithm, including a detailed description of our algorithm and analysis of its theoretical properties.
Section \ref{experiments} compares the performance of our algorithm and its variants with those existing algorithms.
Finally, the conclusions are given in Section \ref{conclusions}.

\section{Related work} \label{related work}
In order to reduce the complexity of DBSCAN, many algorithms have been proposed.
This section briefly reviews the existing algorithms, that are related to our algorithm we propose in
Section \ref{proposed algorithm}.

\subsection{Grid-based Algorithms}
Grid-based algorithms try to reduce the complexity of DBSCAN by partitioning the feature space into grids so as to reduce the range query time by only considering the neighboring grids.
GriDBSCAN \cite{mahran2008using} partitions the feature space into equally sized grids.
The points in each grid and the points in the $\epsilon$-enclosure around the grid are considered a group.
Then, for each point in a certain grid, its $\epsilon$-neighbors are inside the corresponding group, which reduces the time for range query.
However, GriDBSCAN requires one more parameter given by the user.
Recently, Gunawan and de Berg \cite{gunawan2013faster} proposed a grid-based algorithm called G13, which in the worst case has run time complexity $O(n\log n)$ for 2-dimensional data, where the side length of each grid is $\epsilon/\sqrt2$.
G13 improves DBSCAN in three aspects.
First, if a grid contains at least $MinPts$ points, all points in that grid are core points because the distance between any two points in the grid is not greater than $\epsilon$.
This saves a lot of unnecessary distance calculations when identifying core points.
Second, by using the grid structure, only the points in the neighboring grids are considered when finding the $\epsilon$ neighbors of a certain point.
Third, the Voronoi diagram is used to merge the core grids, which dominates the complexity of G13.
Nonetheless, G13 is only suitable for 2-dimensional data.
Gan and Tao \cite{gan2015dbscan, gan2017hardness} extended G13 to higher dimensions.
To be specific, Gan and Tao proposed an exact DBSCAN algorithm that runs in sub-quadratic time and an approximate DBSCAN algorithm that runs in $O(n)$ expected time.
When merging the core grids, the approximate DBSCAN algorithm proposed by Gan and Tao exploits a quadtree-like hierarchical grid structure for approximate range count, resulting in an expected complexity of $O(n)$ for the entire algorithm.
As pointed out in Boonchoo et al. \cite{boonchoo2019grid}, that different order of forming clusters in the exact DBSCAN algorithm proposed by Gan and Tao \cite{gan2015dbscan, gan2017hardness} leads to different running time, so they proposed two strategies, namely, uniform random order and low density first order, to reduce the running time of forming clusters.
However, the results of \cite{boonchoo2019grid} do not improve the theoretical running time of previous work.
To further improve the running time, Wang et al. \cite{wang2020theoretically} parallelized the algorithms proposed by Gan and Tao \cite{gan2015dbscan, gan2017hardness}, and recommended new parallel algorithms for exact DBSCAN and approximate DBSCAN that match the complexity of their sequential counterparts.

To summarise, there is no known grid-based DBSCAN that can obtain the same results as DBSCAN while exhibiting complexity linear to the data set size.

\subsection{Ball-based Algorithms}
Ball-based algorithms divide the data set into several subsets using d-dimensional ball.
Both AnyDBC \cite{mai2016anydbc} and IncAnyDBC \cite{mai2020incremental} use $\epsilon$-ball to divide the data set into primitive clusters.
Then, the cluster structure of the data set is iteratively and actively learned.
In each iteration, a small set of most promising points are selected for refining clusters.
G-DBSCAN \cite{kumar2016fast} employs the group method based on $\epsilon$-ball to obtain a set of groups and runs DBSCAN using groups to accelerate range queries.
NQ-DBSCAN \cite{chen2018fast} uses a local neighborhood searching technique based on $2\epsilon$-ball to reduce the time of range query.
By using $k$NN-ball ($k$-nearest neighbors ball), KNN-BLOCK DBSCAN \cite{chen2019knn} can quickly identify the core points and partition the data set into core-blocks, noncore-blocks, and noise-blocks.
Then the core blocks are merged into clusters.
It is worth pointing out that KNN-BLOCK DBSCAN is an approximate algorithm because it utilizes a fast approximate $k$NN algorithm, namely, FLANN \cite{muja2014scalable}, to find $k$-nearest neighbors.
Moreover, the complexity of KNN-BLOCK DBSCAN is $O(n^2)$ if there is no core-block, and $k$-nearest neighbors queries runs in $O(n)$.
Recently, BLOCK-DBSCAN \cite{CHEN2021107624} employs $\epsilon /2$-ball to quickly identify inner core blocks, within which all points are core points, and applies a fast approximate algorithm to classify whether two inner core blocks are density reachable from each other.
However, when there is no inner core block and the complexity of finding the $\epsilon$-neighborhood of each point is $O(n)$, the complexity of BLOCK-DBSCAN is $O(n^2)$.

Although the ball-based algorithms improve the performance of DBSCAN to some extent, their worst-case time complexity is still $O(n^2)$ \cite{kumar2016fast, chen2018fast, CHEN2021107624, chen2019knn, mai2020incremental}.

\subsection{Sampling-based Algorithms}
DBSCAN expands clusters by performing range query for each point in the data set, which is particularly time consuming.
The sampling-based algorithms are designed to perform range queries on a subset of the data set or reduce the range query time by finding $\epsilon$-neighbors on a subset , thereby reducing the complexity of DBSCAN.
SDBSCAN \cite{zhou2000combining} selects a small number of representative points from the data set to perform DBSCAN.
$l$-DBSCAN \cite{viswanath2006dbscan} obtains prototypes from the data set by leaders algorithm \cite{hartigan1975clustering} and runs DBSCAN on the prototypes to form clusters. Rough-DBSCAN \cite{viswanath2009rough} also uses the leaders algorithm to select representative points, but uses rough set theory \cite{pawlak1991rough} to analyze the quality of clustering results.
Unlike the sampling-based algorithms mentioned above, IDBSCAN \cite{borah2004improved} reduces the number of range queries by selecting representatives inside the $\epsilon$-neighborhood of each core point when expanding clusters.
FDBSCAN \cite{liu2006fast} omits the unnecessary range queries by selecting representatives outside the $\epsilon$-neighborhood of a core point.
NG-DBSCAN \cite{lulli2016ng} is an approximate density-based algorithm which first creates an approximate $k$NN graph by computing the distance of points at a 2-hop distance, and then utilizes the approximate $k$NN graph to form clusters.
Recently, Jang and Jiang \cite{jang2019dbscan} suggested DBSCAN++, which reduces the complexity of DBSCAN by performing range query on a subset of the data set obtained by uniform sampling or k-center algorithm \cite{gonzalez1985clustering}.
For each point, SNG-DBSCAN \cite{jang2020dbscan} finds its $\epsilon$-neighborhood on a subset of the data set to reduce the range query time.

Nevertheless, the sampling-based improved algorithms still have the drawback that the clustering results may be inconsistent with the results of DBSCAN.

\section{Preliminaries} \label{preliminaries}
This section covers the definitions and basic clustering processes of the existing algorithms that are related to our proposed algorithm, which facilitates the comprehension of our proposed algorithm.
Section \ref{DBSCAN-original} summarizes the key notations used throughout this paper in Table \ref{NOTATIONS}, and reviews the original DBSCAN algorithm in \cite{ester1996density}.
Section \ref{DBSCAN-2D} reviews the G13 proposed by Gunawan and de Berg \cite{gunawan2013faster} that improves the performance of DBSCAN in the 2-dimensional space.
Section \ref{appr-dbscan} reviews the state of the art approximate algorithm---$\rho$-approximate DBSCAN in \cite{gan2015dbscan},\cite{gan2017hardness}---that solves the problem with slight inaccuracy in $O(n)$ expected time.

\begin{table}[!t]
\renewcommand{\arraystretch}{1.3}
\caption{Notations used in this paper}
\label{NOTATIONS}
\centering
\begin{tabular}{l l}
\hline
\bfseries Notation & \bfseries Description\\
\hline
$P$ & Data set\\
$p,q$ & The points\\
$n$ & Number of points\\
$d$ & Number of dimensions\\
$dist(\cdot,\cdot)$ & Distance function\\
$MinDist(\cdot,\cdot)$ & Minimum distance between two sets\\
$\epsilon,MinPts$ & Parameters of DBSCAN\\
$N_{\epsilon}(p)$ & $p$'s $\epsilon$-neighborhood\\
$G$ & An undirected graph\\
$Gs$ & The set of non-empty grids\\
$g$ & The grid\\
$\mathcal{N}_{\epsilon}(g)$ & The set of $g$'s neighboring grids \\
$Nei(g)$ & The set of $g$'s non-empty neighboring grids \\
$m$ & Number of core points in a grid\\
$\eta$ & Maximum interval number in the feature space\\
$T$ & A grid tree\\
$root$ & The root of $T$\\
$t$ & A node in $T$\\
$h$ & The hash table of $T$\\
$\gamma,\omega,\theta,\lambda$ & The angles\\
$\kappa$ & Number of iterations\\
\hline
\end{tabular}
\end{table}

\subsection{DBSCAN} \label{DBSCAN-original}
Consider a set of $n$ points $P$ in a $d$-dimensional space with a distance function $dist$: $R^{d} \times R^{d} \to R$ giving the Euclidean distance $dist(p, q)$ between $p, q \in P$.
$N_{\epsilon}(p)$ denotes the $\epsilon$-neighborhood of $p \in P$ with radius $\epsilon$, i.e. $N_{\epsilon}(p) = \{q|dist(p, q) \le \epsilon, q \in P\}$.

Some important concepts of DBSCAN are defined as follows.
\begin{definition}
	\label{corepoint}
    A point $p \in P$ is a core point if it's $\epsilon$-neighborhood satisfies $|N_{\epsilon}(p)| \ge MinPts$.
\end{definition}

\begin{definition}
	\label{ddr}
    A point $p \in P$ is directly density-reachable from a point $q \in P$ wrt. $\epsilon, MinPts$ if $q$ is a core point and $p \in N_{\epsilon}(q)$.
\end{definition}

Note that point $p$ does not need to be a core point in Definition \ref{ddr}. Consequently, direct density-reachability is not symmetric if $p$ is a non-core point.

\begin{definition}
	\label{dr}
    A point $p \in P$ is density-reachable from a point $q \in P$ wrt. $\epsilon, MinPts$ if there is a sequence of points $p_1, p_2, ..., p_z \in P$ such that $p_1 = q, p_z = p$, and $p_{i+1}$ is directly density-reachable from $p_{i}$, where $i=1, 2, ..., z - 1$.
\end{definition}

\begin{definition}
	\label{dc}
    A point $p \in P$ is density-connected to a point $q \in P$ wrt. $\epsilon, MinPts$ if there is a point $b \in P$ such that $p$ and $q$ are density-reachable from $b$ wrt. $\epsilon, MinPts$.
\end{definition}

\begin{definition}
	\label{cluster}
    A non-empty set $C$ is a cluster wrt. $\epsilon, MinPts$ if $C$ satisfies the following conditions:
    \begin{enumerate}[\IEEEsetlabelwidth{12)}]
    \item If $p \in C$ and $p$ is a core point, then all points density-reachable from $p$ wrt. $\epsilon, MinPts$ also belong to $C$. (Maximality)
    \item $\forall p, q \in C$, $p$ is density-connected to $q$ wrt. $\epsilon, MinPts$. (Connectivity)
    \end{enumerate}
\end{definition}

Given two parameters $\epsilon$ and $MinPts$, DBSCAN algorithm aims to find all clusters in $P$.
DBSCAN algorithm randomly selects an unprocessed core point from $P$, and then finds all points that are density-reachable from $p$ wrt. $\epsilon, MinPts$ to form a cluster.
The algorithm repeats the above step until all clusters are found.

A cluster found by DBSCAN contains not only core points but also non-core points.
A non-core point is called a border point if it belongs to at least one cluster.
And points do not belong to any cluster are called noise points.

\subsection{G13} \label{DBSCAN-2D}
G13 solves the exact DBSCAN with $O(n\log n)$ time complexity using the grid structure.
The algorithm consists of four major steps.
First, partition the feature space into equal-sized grids.
Second, identify core points in the data set.
Third, merge core points to form clusters.
Lastly, assign non-core points to clusters according to density reachability.

In the first step, the feature space is divided into multiple grids of the same size with side length $\epsilon/ \sqrt{2}$, and then each point will be assigned to the grid it lies in.
A grid is called an empty grid if there are no points inside it. Otherwise, it is a non-empty grid.

The second step is to identify all core points in the data set.
A grid containing at least one core point is called a core grid.
Recall that the side length of each grid is $\epsilon/ \sqrt{2}$, which ensures the distance between any two points in the same grid will not be greater than $\epsilon$.
Therefore, if there are more than $MinPts$ points in the same grid, then all those points are core points.

For a grid $g_i$ with points less than $MinPts$, the algorithm checks each point $p$ in $g_i$ and determines whether it is a core point.
Let $\mathcal{N}_{\epsilon}(g_i)$ denote the neighboring grids of $g_i$, where $\mathcal{N}_{\epsilon}(g_i)=\{g^\prime|ditance(g^\prime, g_i) \le \epsilon\}$ and $ditance(g^\prime, g_i)$ is the minimum distance between $g^\prime$ and $g_i$.
To determine whether $p$ is a core point or not, the algorithm detects its neighbors in the non-empty grids of $\mathcal{N}_{\epsilon}(g_i)$.

The third step, namely the merging step, is to find clusters formed by core points.
To seek out clusters, we need to iterate through all core grids.

Initially, each core grid can be treated as an individual cluster based on the fact that all points in the same core grid are density-reachable from each other and the definition of a cluster.

Then those core grids that are density-reachable from each other will be merged into one cluster.
Let $G = (V, E)$ be a graph, where each vertex $v \in V$ corresponds to a core grid, and each edge $(g_i, g_j) \in E$ represents that $g_i$ and $g_j$ can be merged.
Given two core grids $g_i$ and $g_j$, the algorithm determines whether they can be merged by the following definition.

\begin{definition}
	\label{merge}
    For two core grids $g_i$ and $g_j$, they can be merged if and only if there are core points $p \in g_i$ and $q \in g_j$ such that $dist(p, q) \leq \epsilon$.
\end{definition}

Given a core grid $g_i$, for each core grid $g_j\in \mathcal{N}_{\epsilon}(g_i)$, we explain in detail how to verify whether $G$ has an edge between $g_i$ and $g_j$. For each core point $p \in g_i$, G13 uses the Voronoi diagram to find the core point $q \in g_j$ closest to $p$.
If the distance between $p$ and $q$ is not greater than $\epsilon$, then edge $(g_i, g_j)$ can be added to $G$.
If all core points $p \in g_i$ have been traversed and edge $(g_i, g_j)$ has not been added to $G$, then $g_i$ and $g_j$ can not be merged according to Definition \ref{merge}.
After the graph $G$ has been created, each connected component of $G$ represents a cluster.

The last step is to assign non-core points to clusters according to density reachability.
For each non-core point $p$, if there is at least one core point in $p$'s $\epsilon$-neighborhood, $p$ is called a border point. Otherwise, it is a noise point.

As shown in \cite{gunawan2013faster}, the time complexity of the merging step is $O(n\log n)$ if the Voronoi diagram is used, and the rest of the algorithm's time complexity is $O(MinPts \cdot n)$. Hence, the overall complexity of G13 is $O(n\log n)$ which is dominated by the merging step.

\subsection{$\rho$-approximate DBSCAN} \label{appr-dbscan}
$\rho$-approximate DBSCAN is an approximate algorithm that extends G13 to higher dimensions, and its complexity is linear to the data set size.

To ensure the distance between any two points in the same grid is not greater than $\epsilon$, the side length of the grid is set to be $\epsilon/ \sqrt{d}$.
In addition to the different setting of the side length of grid, the rule of merging core grids is different as well.
$\rho$-approximate DBSCAN uses a quadtree-like hierarchical grid structure for approximate range count query when merging core grids.
Formally, we fix a core grid $g_i$ and check each non-empty neighboring grid $g_j$ of $g_i$.
An edge $(g_i, g_j)$ is added to $G$ if there is a core point $p \in g_i$ and the approximate range count query on the core points of $g_j$ return yes.
Gan and Tao \cite{gan2015dbscan, gan2017hardness} showed that the overall complexity of $\rho$-approximate DBSCAN is $O(n)$ in expectation, regardless of the constant dimension $d$, the constant approximation ratio $\rho$, and the parameter $\epsilon$.

\section{The proposed algorithm} \label{proposed algorithm}
In this section, we propose our algorithm which aims at solving the exact DBSCAN with time complexity linear to the data set size.
Firstly, we present a new approach to construct grids.
Secondly, we introduce grid tree which allows us to efficiently find the non-empty neighboring grids for each grid.
Thirdly, we introduce a critical new technique to swiftly determine whether any two core grids should be merged.
Finally, we combine the above techniques to obtain our proposed algorithm called GriT-DBSCAN, which stands for \textbf{Gri}d \textbf{T}ree \textbf{DBSCAN}.

\subsection{Constructing the Grids}
\begin{figure}[!t]
\centering
\includegraphics[width=2.5in]{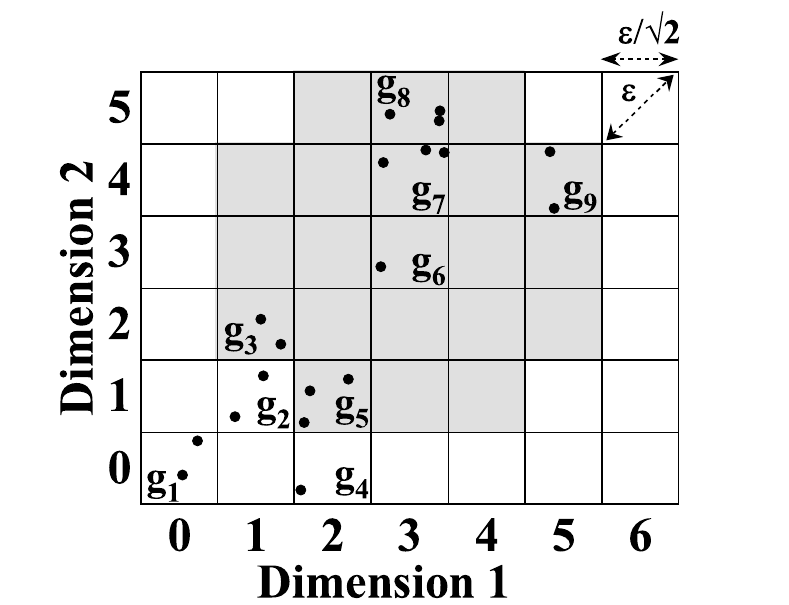}
\caption{Grids in 2-dimensional space ($\mathcal{N}_{\epsilon}(g_6)$ is shown in gray).}
\label{constructGrids}
\end{figure}

When constructing the grids, each dimension of the feature space is divided into equal size intervals of length $\epsilon/ \sqrt{d}$.
Then, the feature space is divided into multiple equally sized grids and each grid $g_i$ is uniquely determined by a $d$-dimensional vector $(g_{i1}, g_{i2}, ..., g_{ij}, ..., g_{id})\in N^d$  denoted as identifier.
A point $p$ lies in $g_i$ if and only if for all $j = 1,2,...,d$, the following formula holds:
\begin{equation}
\label{grid}
    g_{ij} = \lfloor\frac{p_{j}-mn_j}{\epsilon/ \sqrt{d}}\rfloor
\end{equation}
where $mn_j = \min_{p\in P}p_{j}$ and $\lfloor\cdot\rfloor$ is the floor function.

We can now cover how to construct grids.

First, for each point $p\in P$, the identifier of the grid it lies in is calculated by (\ref{grid}).

Second, the $n$ points are sorted using radix sort\cite{kunth1975art} according to their identifiers.
Initially, the $n$ points are sorted using counting sort according to the value of their identifiers on dimension $d$.
Then, the $n$ points are sorted according to the value of their identifiers on dimension $d-1$.
Repeat this process for each dimension.
After sorting, the points in the same grid are placed adjacently.

Finally, we traverse the sorted points and find all non-empty grids denoted as $Gs$.
Moreover, for any two grids $g_i,g_j \in Gs$ with $i<j$, there exists an integer $z\in [1, d)$ such that $g_{iz} < g_{jz}$ and $g_{iw} = g_{jw}$, for each $w\leq z-1$.

The detailed partitioning technique is shown in Algorithm \ref{partition}.

\begin{algorithm}[ht]
	\caption{Partitioning}
	\label{partition}
	\begin{algorithmic}[1]
        \REQUIRE point set $P$; $\epsilon$.
        \ENSURE $Gs$: non-empty grids.
        \STATE Calculate the identifiers by (\ref{grid}).
        \STATE Sort $P$ based on their identifiers using radix sort.
        \STATE Initialize $Gs$ as an empty set.
        \STATE $p = $ the 1st point in $P$
        \STATE $gs=\{p\}$
		\FOR{$i = 2:n$}
        \STATE $q = $ the $i$th point in $P$
        \IF {$q$ and $p$ not in the same grid}
        \STATE $Gs = Gs \cup \{gs\}$
        \STATE $gs=\emptyset$
        \ENDIF
        \STATE $p=q$
        \STATE $gs=gs \cup \{p\}$
        \ENDFOR
        \STATE $Gs = Gs \cup \{gs\}$
        \RETURN $Gs$
	\end{algorithmic}
\end{algorithm}

\begin{figure*}[!t]
\centering
\includegraphics[width=4.5in]{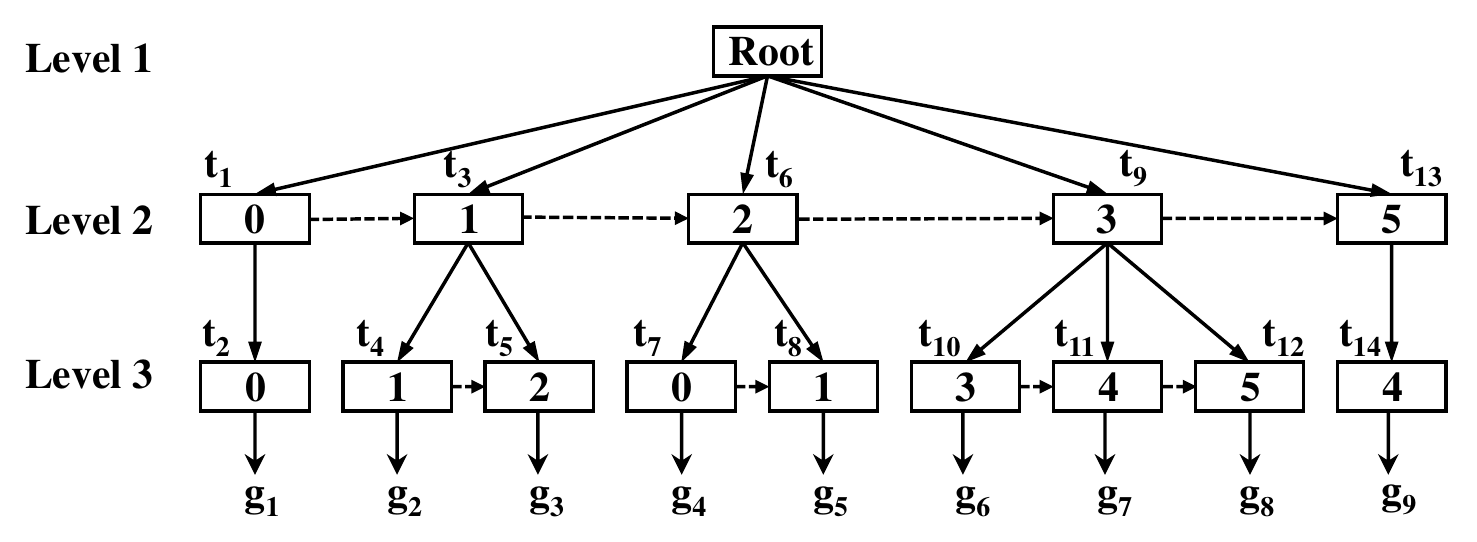}
\caption{Grid tree for the non-empty grids in Fig. \ref{constructGrids}.}
\label{gridTree}
\end{figure*}

It's easy to verify that the second step runs in $O(n+\eta)$, where $\eta=\max \{g_{ij}| g_i\in Gs,1\leq j\leq d\}$ is a constant associated with $\epsilon$, $d$, the maximum and minimum coordinates in each dimension.
The rest of the algorithm's time complexity is $O(n)$.
Therefore, the complexity of Algorithm \ref{partition} is $O(n+\eta)$.

To illustrate, Fig. \ref{constructGrids} shows the grid structure on a set with 19 points, where $\eta = 5$.
There are 9 non-empty grids in the feature space, and the set of non-empty grids is $Gs = \{g_1,g_2,...,g_9\}$.
In addition, the neighboring grids of $g_6$ are shown in gray.

\subsection{Indexing the Grids} \label{datastructure}
G13 and $\rho$-approximate DBSCAN significantly improve the performance of DBSCAN using the grid structure.
However, for a given grid $g_i$, the number of neighboring grids of $g_i$ increases exponentially with the dimension\cite{boonchoo2019grid}.
One important innovation of this paper is to introduce the grid tree as a structure to organize the non-empty grids $Gs$.
Given a non-empty grid $g_i$, the grid tree enables us to find all non-empty grids in $\mathcal{N}_{\epsilon}(g_i)$ more efficiently.
We first describe the definition of the grid tree.
Then, we introduce an algorithm for finding non-empty neighboring grids using the grid tree.
Finally, we present theoretical evidence showing the grid tree is an effective structure for non-empty neighboring grids queries.

\subsubsection{Grid Tree}
Grid tree is a tree-like structure built to organize non-empty grids.
Denoted as $T$, a grid tree has $d+1$ levels.
At the 1st level, there is a root node ($root$) which contains child pointers to child nodes.
And the $j$th child pointer of $root$ will be denoted as $CHILD_j(root)$.
All nodes in the $i$th level, where $1<i\leq d$, are called internal nodes.
In addition to the child pointers, each internal node $t$ contains one key denoted as $KEY(t)$, and a next pointer $NEXT(t)$ which is either null or points to another node in the same level. If the next pointer is not null, then we have $KEY(t) < KEY(NEXT(t))$.
Each leaf node $t$ is in the ${(d+1)}{th}$ level, containing a key, a next pointer, and a pointer to a non-empty grid denoted as $GRID(t)$.
For each node $t$ in the grid tree, if it has more than $MinPts$ child nodes, then there are many items associated with $t$ in the hash table $h$ of $T$ that can be used to tell which child node to visit at the next level when searching the non-empty neighboring grids for a particular grid.
Let $\lceil \cdot \rceil$ denote the ceiling function.
Each item of $h$ is a $\{t, key\}-pt$ pair, where $pt$ is the node with the smallest key among the child nodes of $t$ whose keys are between $key-\lceil \sqrt{d}\rceil$ and $key+\lceil \sqrt{d}\rceil$.
Furthermore, for each path $t_{(1)},t_{(2)},...,t_{(d+1)}$ from the root node to leaf node, where $t_{(1)}$ is the root node and $t_{(i+1)}$ is a child of $t_{(i)}$, $i=1,2,...,d$, the identifier of the non-empty grid that $t_{(d+1)}$ points to is equal to $(KEY(t_{(2)}),KEY(t_{(3)}),...,KEY(t_{(d+1)}))$.

\begin{algorithm}[ht]
	\caption{BuildingGridTree}
	\label{bgt}
	\begin{algorithmic}[1]
        \REQUIRE $Gs$: non-empty grids.
        \ENSURE $T$: a grid tree.
        \STATE Initialize $T$ with a root node and an empty hash table.
        \FOR{$i = 1:|Gs|$}
        \STATE $t=T.root$
        \FOR {$j=1:d$}
        \STATE $tc = $ $t$'s last child
        \IF {$tc = null$ \textbf{or} $KEY(tc)\neq g_{ij}$}
        \STATE /*Generate a path from $t$ to leaf node that identifies $g_i$.*/
        \FOR {$l=j:d$}
        \STATE Create a new node $nd$ with key $g_{il}$.
        \STATE $t$'s last child $=nd$
        \IF {$tc \neq null$ \textbf{and} $l=j$}
        \STATE $NEXT(tc) = nd$
        \ENDIF
        \STATE $t = nd$
        \ENDFOR
        \STATE break
        \ENDIF
        \STATE $t=tc$
        \ENDFOR
        \ENDFOR
        \STATE $h = T.h$
        \FORALL{node $t\in T$}
        \IF {$t$ has more than $MinPts$ child nodes}
        \STATE /*Insert items associated with $t$ into $h$.*/
        \STATE $nd=CHILD_1(t)$; $pos = 0$
        \WHILE {$nd\neq null$}
        \STATE $pos=\max\{pos,KEY(nd) - \lceil \sqrt{d}\rceil\}$
        \WHILE {$pos \leq KEY(nd) + \lceil \sqrt{d}\rceil$ \textbf{and} $pos\leq \eta$}
        \STATE Insert $\{t, pos\}-nd$ into $h$.
        \STATE $pos = pos+1$
        \ENDWHILE
        \STATE $nd = NEXT(nd)$
        \ENDWHILE
        \ENDIF
        \ENDFOR
        \RETURN $T$
	\end{algorithmic}
\end{algorithm}

Algorithm \ref{bgt} describes a method to construct a grid tree for non-empty grids.
First, we initially create a tree with one node.
Then, for each non-empty grid in $Gs$, nodes are inserted to form a path from the root node to leaf node which identifies the grid.
Finally, for each node in the grid tree, if it has more than $MinPts$ child nodes, then the algorithm will generate items associated with it into $h$ (Lines 17-27).

To illustrate, Fig. \ref{gridTree} shows the grid tree for the non-empty grids in Fig. \ref{constructGrids}.
Suppose that $MinPts = 3$. Then there are six items associated with the root node in $h$: $\{root, 0\}-t_1$, $\{root, 1\}-t_1$, $\{root, 2\}-t_1$, $\{root, 3\}-t_3$, $\{root, 4\}-t_6$, and $\{root, 5\}-t_9$.

\subsubsection{Non-empty Neighboring Grids Query}
With a grid tree in place, we are now ready to describe the algorithm using the grid tree to identify all non-empty neighboring grids for each grid.

Given a grid $g_i$, to find its non-empty neighboring grids, the algorithm recursively traverses down the tree starting from the root node.
The algorithm first finds all child nodes of the root node with keys between $g_{i1}-\lceil \sqrt{d}\rceil$ and $g_{i1}+\lceil \sqrt{d}\rceil$.
If the root node has more than $MinPts$ child nodes, the item whose $\{t, key\}$ is equal to $\{root, g_{i1}\}$ in $h$ is used to identify the child node with the smallest key that meets the condition, and then iteratively call $NEXT$ to find all nodes with keys between $g_{i1}-\lceil \sqrt{d}\rceil$ and $ g_{i1}+\lceil \sqrt{d}\rceil$.
Otherwise, these nodes are found by examining each child node of the root node.
Denote these nodes as $\Phi$.
For each node $t\in \Phi$, the current $offset$ of $t$ is $t.offset = \max\{|KEY(t) - g_{i1}|-1,0\}^2$.
The offset of $t$ indicates that the minimum distance between $g_i$ and any grid in this subtree will not less than $\sqrt{t.offset}\cdot \epsilon/\sqrt{d}$.
Therefore, all non-empty neighboring grids of $g_i$ are in the subtrees of nodes in $\Phi$.
This is because, for each grid $g_j$ not in the subtrees of the nodes in $\Phi$, we have $g_{j1}-g_{i1}>\lceil \sqrt{d}\rceil$.
It follows that the minimum distance between $g_i$ and $g_j$ is greater than $\epsilon$ ($(\lceil \sqrt{d}\rceil+1)\cdot\epsilon/ \sqrt{d} > \epsilon$).
Therefore, there is no need to consider all nodes not in the subtrees of the nodes in $\Phi$.
As a result, a lot of redundant computations can be avoided.
This significantly improves the performance of our algorithm.

After that, for each node $t\in \Phi$, its child nodes with keys between $g_{i2}-\lceil \sqrt{d}\rceil$ and $g_{i2}+\lceil \sqrt{d}\rceil$ are found.
For each such node $t^\prime$, its offset is calculated by
\begin{equation}
\label{offset}
    t^\prime.offset = t.offset+\max\{|KEY(t) - g_{i2}|-1,0\}^2.
\end{equation}
All these nodes will be stored in the set $temp$.
Nodes in $temp$ with $offset$ greater than or equal to $d$ will be further excluded.
Then, $\Phi$ is updated to $temp$.
This procedure is repeated until all nodes in $\Phi$ are leaf nodes.

To sum up, we find all non-empty neighboring grids of $g_i$ by calling $NeighboringGridsQuery(T, g_i)$.
The procedure $NeighboringGridsQuery$ is summarized in Algorithm \ref{ngq}.
We denote the non-empty neighboring grids of $g_i$ as $Nei(g_i)$.

It is worth noting that the non-empty neighboring grids are sorted in ascending order using counting sort according to their offset.
This will significantly accelerate the performance of identifying core points.
For one thing,  a point's neighbors are more likely to be in the grid close to it.
For another, given a point, once we confirm that it has more than $MinPts$ neighbors, it will be identified as a core point without finding all its neighbors.
This saves a lot of unnecessary calculations.

\begin{algorithm}[ht]
	\caption{NeighboringGridsQuery}
	\label{ngq}
	\begin{algorithmic}[1]
        \REQUIRE $T$: a grid tree; $g_i$: a grid.
        \ENSURE $N$:  the non-empty neighboring grids of $g_i$.
        \STATE $T.root.offset = 0$
        \STATE $\Phi = \{T.root\}$
        \FOR {$j=1:d$}
        \STATE $temp = \emptyset$
        \FORALL {$nd\in \Phi$}
        \STATE $childNodes$ = all child nodes of $nd$ with keys between $g_{ij}-\lceil \sqrt{d}\rceil$ and $g_{ij}+\lceil \sqrt{d}\rceil$
        \FORALL {$child\in childNodes$}
        \STATE Calculate $child$'s offset by (\ref{offset}).
        \IF {$child.offset\geq d$}
        \STATE Remove $child$ from $childNodes$.
        \ENDIF
        \ENDFOR
        \STATE $temp = temp \cup childNodes$
        \ENDFOR
        \STATE $\Phi = temp$
        \ENDFOR
        \STATE $N=\emptyset$
        \FORALL {$nd\in \Phi$}
        \STATE $N= N \cup \{GRID(nd)$\}
        \ENDFOR
        \STATE Sort $N$ by radix sort according to their offsets.
        \RETURN $N$
	\end{algorithmic}
\end{algorithm}

For example, consider the query of $g_6 = (3,3)$ in Fig. \ref{gridTree}.
Start from the root node; the algorithm first finds its child nodes with keys between 1 and 5, and the result is $\Phi=\{t_3, t_6,t_9,t_{13}\}$.
Then for $t_3$, the algorithm first finds the child nodes of $t_3$ with keys between 1 and 5.
Clearly, both child nodes of $t_3$ meet the condition.
But $t_4.offset = 2$, so it is excluded.
Other nodes in $\Phi$ are processed in a similar way.
Finally, $\Phi$ is updated to $\{t_5, t_8,t_{11},t_{12},t_{14}\}$.
Therefore, the non-empty neighboring grids of $g_6$ are $g_3,g_5,g_7,g_8,g_9$.

\subsubsection{Complexity Analysis}
A grid tree has $d+1$ levels, and each level contains at most $|Gs|$ nodes.
In addition, the space complexity of the hash table is $O(d\cdot \sqrt{d}\cdot |Gs|)=O(d^{3/2}\cdot |Gs|)$.
So the overall space complexity of grid tree is $O(d^{3/2}\cdot |Gs|)$.
To build a grid tree, we only need to scan $Gs$ once and update the hash table.
Obviously, the expected complexity of updating the hash table is $O(d^{3/2}\cdot |Gs|)$.
Therefore, Algorithm \ref{bgt} runs in $O(d\cdot |Gs| + d^{3/2}\cdot |Gs|) = O(d^{3/2}\cdot |Gs|)$ expected time.

To find the non-empty neighboring grids of a given grid, at the $i$th level, there are at most $(2\lceil\sqrt{d}\rceil+1)^i$ nodes in $\Phi$.
Sorting $Nei(g_i)$ runs in $O(|Nei(g_i)|+d)$ because the offset of each node in $Nei(g_i)$ is less than $d$.
It follows that Algorithm \ref{ngq} runs in $O(d\cdot(2\lceil\sqrt{d}\rceil+1)^d)$ worst case time, which is constant for any fixed $d$.
On the other hand, if each node in $nodes$ contains at least one neighboring grid of $g_i$ in its subtree, then the running time will be $O(d\cdot|Nei(g_i)|)$.

\subsection{Merging Two Grids}\label{merging two grids}
Apart from finding non-empty neighboring grids, the merging step is also time-consuming as mentioned in Section \ref{preliminaries}.
In order to reduce the execution time of the merging step, we propose a new merging algorithm that can prune unnecessary distance calculations by making use of the spatial relationships among points.
We also show that the proposed merging algorithm is theoretically effective.

\subsubsection{Motivation}
Given two core grids $g_i$ and $g_j$, our goal is to determine whether they can be merged.
Let $s_i$ and $s_j$ be the sets of core points in $g_i$ and $g_j$, respectively.
Let $m_i$ and $m_j$ be the number of points in $s_i$ and $s_j$, respectively.
Clearly, $s_i$ and $s_j$ are linearly separable, that is, there exists a hyperplane such that all points in $s_i$ are on or below the hyperplane, and all points in $s_j$ are on or above the hyperplane.
Based on Definition \ref{merge}, $g_i$ and $g_j$ can be merged if and only if $MinDist(s_i, s_j)\leq \epsilon$.

\begin{definition}
	\label{md}
    The minimum distance between two sets $s_i$ and $s_j$ is defined as
    \begin{equation}
    MinDist(s_i, s_j) = \underset{p \in s_i, q \in s_j}{\min}dist(p,q).
    \end{equation}
\end{definition}
A straightforward way is to calculate the minimum distance between $s_i$ and $s_j$ using the brute force algorithm.
However, it takes $O(m_im_j)$ time to calculate the minimum distance between $s_i$ and $s_j$ with the brute force algorithm as it needs to calculate the distances between every pair of points $p\in s_i$ and $q \in s_j$.
This is unacceptable when both $m_i$ and $m_j$ are large.

Several algorithms have been proposed to reduce the time complexity of determining whether two grids can be merged.
G13 \cite{gunawan2013faster} runs in $O(n\log n)$ time using a Voronoi diagram to find the nearest point in a neighboring grid.
Although the complexity of $\rho$-approximate DBSCAN \cite{gan2017hardness} reduces down to $O(n)$ in expectation, it permits inaccuracy in the result.
In order to reduce unnecessary distance calculations, KNN-BLOCK DBSCAN \cite{chen2019knn} introduces a simple technique to eliminate impossible points in each set before using the brute force algorithm.
In KNN-BLOCK DBSCAN, there is a point in each set as the center of the set.
Let $c_i$ and $c_j$ be the centers of $s_i$ and $s_j$, respectively.
The simple technique eliminates point $p \in s_i$ if it satisfies $dist(p, c_j) > \epsilon + \xi_j$, where $\xi_j = \max_{q \in s_j}dist(c_j,q)$.
However, when the points of $s_i$ are all within $N_{\epsilon + \xi_j}(c_j)$ and all points of $s_j$ are within $N_{\epsilon + \xi_i}(c_i)$, the running time of checking whether $g_i$ and $g_j$ can be merged is $O(m_im_j)$.
In\cite{CHEN2021107624}, Chen et al. proposed the BLOCK-DBSCAN which uses an approximate merging algorithm.
It first selects a random point in $s_i$ as a seed, and identifies the closest point to the seed in another set as a new seed.
The merging algorithm repeats this procedure until convergence.
Although the merging algorithm in BLOCK-DBSCAN is efficient and converges quickly, the result may be inaccurate.

To summarise, there are few algorithms that guarantee both accuracy and low complexity.
Therefore, we propose a fast merging algorithm based on the spatial relationships between points to reduce unnecessary distance calculations which dominate the execution time of the merging step.

\subsubsection{Fast Merging Algorithm}
Recall that our goal is to check whether two core grids can be merged.
In this paper, we introduce a fast merging algorithm utilizing the spatial relationships among points to efficiently solve the problem with fewer required distance calculations.
We first introduce two pruning strategies founded on spatial relationships among points.
Then, we present the pruning method based on these two pruning strategies.
Finally, we introduce the merging algorithm in detail.

Without loss of generality, in Fig. \ref{basicIdea} we take two linearly separable sets as an example to illustrate the general ideas of these two pruning strategies since the new merging algorithm can be generalized to determine whether the minimum distance between any two linearly separable sets is less than a certain threshold.
Assume that we have already calculated the distance from point $p\in s_i$ to all points in $s_j$.
The closest point to $p$ is point $q\in s_j$, and the corresponding distance is $dist(p,q) > \epsilon$.
For convenience, let $\sigma = dist(p,q) - \epsilon$.
Based on the spatial relationships among points, we propose the following two pruning strategies used in our pruning method.

\begin{figure}[!t]
\centering
\includegraphics[width=2.5in]{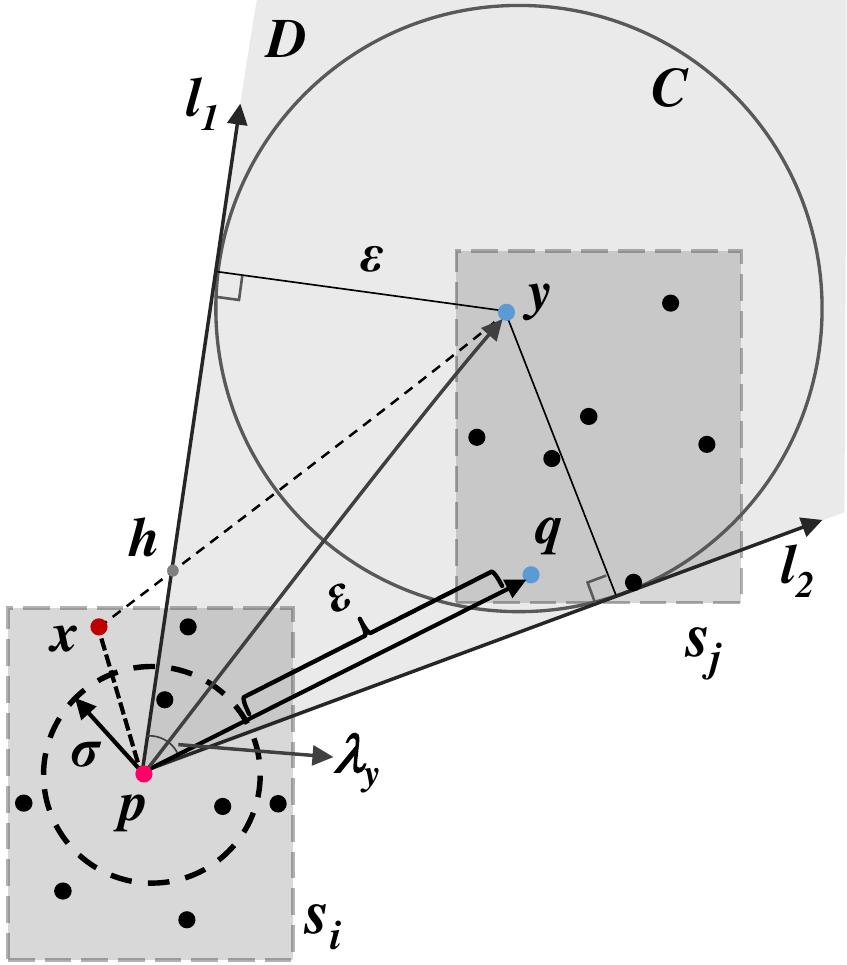}
\caption{The general ideas of our two pruning strategies.}
\label{basicIdea}
\end{figure}

In fact, if $p$ is far away from every point in $s_j$, the point close enough to $p$ will be far away from every point in $s_j$ as well.
This motivates us to propose the first pruning strategy, namely, triangle inequality pruning strategy, to prune trivial points.
Here, a point $x$ is a trivial point if the minimum distance between set $\{x\}$ and $s_j$ is greater than $\epsilon$.
For each point $x\in s_i$, if $dist(x, p)<\sigma$, then $x$ is a trivial point.
The reasoning is that for every point $y\in s_j$, the distance between $x$ and $y$ satisfies
\begin{IEEEeqnarray}{rCl}
\label{tri}
dist(x, y)&&{}\geq{}dist(p, y) - dist(x, p)\IEEEnonumber\\
&&{}> dist(p, q) - \sigma \IEEEnonumber\\
&&{}=\epsilon.
\end{IEEEeqnarray}
The first inequality is based on the triangle inequality. As a result, all points inside the $\sigma$-ball centered at $p$ can be pruned.

The second pruning strategy is angle-based pruning strategy.
We first introduce our findings, then explain in detail our angle-based pruning strategy.

As shown in Fig. \ref{basicIdea}, $C$ is a circle with radius $\epsilon$ and $y$ as its center.
$l_1$ and $l_2$ are the two tangent lines of circle $C$ passing through $p$. Evidently, the distance from $y$ to any point on the tangent line is greater than or equal to $\epsilon$. Let $D$ denote the region surrounded by the two tangent lines $l_1$ and $l_2$.
It is simple to verify that, for every point in $x\in s_i$, if it is outside $D$, the distance between $x$ and $y$ is greater than $\epsilon$.
The reasoning is that the line $xy$ must intersect one of the tangent lines, and the distance from the intersection point to $y$ is not less than $\epsilon$.

For example: in Fig. \ref{basicIdea}, the line segment $xy$ intersects $l_1$ at $h$, it can be concluded that $dist(x, y)$ is greater than $\epsilon$ since
\begin{equation}
dist(x, y) = dist(x, h) + dist(y, h)\nonumber
\end{equation}
where $dist(y, h) \geq \epsilon$ and $dist(x, h) > 0$.

For each point $y\in s_j$, the maximum angle of $y$ used to determine which points in $s_i$ are not in the $N_{\epsilon}(y)$ is defined as follows.

\begin{definition}
	\label{angle}
    Given $s_i$, $s_j$, and $p\in s_i$, the maximum angle of $y\in s_j$ wrt. $p$ is defined as
    \begin{equation}
    \label{lambda}
    \lambda_y = \arcsin \frac{\epsilon}{dist(p,y)} + \arccos \frac{\overrightarrow{p q}\cdot \overrightarrow{p y}}{dist(p,q)\times dist(p,y)}
    \end{equation}
    where $q = {\arg\min}_{y \in s_j}dist(p,y)$.
\end{definition}

The following theorem shows that for $\forall x\in s_i$, if the angle between $\overrightarrow{p q}$ and $\overrightarrow{p x}$ is greater than the maximum angle of $y$, then $x$ is not in the $\epsilon$-neighborhood of $y$.

\begin{theorem}
	\label{angle lemma}
    Given $s_i, s_j$, and a point $p\in s_i$ such that $\min_{z \in s_j}dist(p,z)>\epsilon$. Let $q = {\arg\min}_{z \in s_j}dist(p,z)$.
    For each $x\in s_i$, if the angle between $\overrightarrow{p q}$ and $\overrightarrow{p x}$ is greater than $\lambda_y$, then
    \begin{equation}
    dist(x, y) > \epsilon.
    \end{equation}
    Furthermore, if $s_i$ and $s_j$ are the sets of core points in $g_i$ and $g_j$, respectively, where $g_i\in Nei(g_j)$,
    then, for each point $y\in s_j$, the maximum angle of $y$ wrt. $p$ is less than $5\pi/6$.
\end{theorem}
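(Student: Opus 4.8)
The plan is to treat the theorem's two claims separately: the first is a purely geometric consequence of the way $\lambda_y$ is assembled, while the second is a quantitative bound that exploits the fact that $q$ and $y$ share the grid $g_j$.

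For the first claim I would reduce everything to angles measured at the apex $p$. Write $\theta_1 = \arcsin\frac{\epsilon}{dist(p,y)}$ and $\theta_2 = \angle qpy = \arccos\frac{\overrightarrow{pq}\cdot\overrightarrow{py}}{dist(p,q)\,dist(p,y)}$, so that $\lambda_y = \theta_1 + \theta_2$. The discussion preceding the theorem already supplies the key geometric fact: $\theta_1$ is exactly the half-opening angle of the tangent cone from $p$ to the ball $B(y,\epsilon)$, and any $x$ whose direction $\overrightarrow{px}$ makes an angle exceeding $\theta_1$ with $\overrightarrow{py}$ lies outside that cone and hence satisfies $dist(x,y) > \epsilon$. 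Thus it suffices to prove $\angle ypx > \theta_1$. I would obtain this from the angular triangle inequality on the unit sphere of directions at $p$: treating $\angle(\cdot,\cdot)$ as the geodesic metric gives $\angle ypx \ge \angle qpx - \angle qpy$. Since the hypothesis is $\angle qpx > \lambda_y = \theta_1 + \theta_2 = \theta_1 + \angle qpy$, subtracting $\angle qpy$ yields $\angle qpx - \angle qpy > \theta_1$, and therefore $\angle ypx > \theta_1$, as required.

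For the second claim I would bound $\theta_1$ and $\theta_2$ separately and add. Because the standing hypothesis gives $\min_{z\in s_j}dist(p,z) > \epsilon$ and $y\in s_j$, we have $dist(p,y) > \epsilon$, so $\frac{\epsilon}{dist(p,y)} < 1$ and $\theta_1 < \pi/2$ strictly. The crucial observation for $\theta_2$ is that $q$ and $y$ both lie in $g_j$, whose side length $\epsilon/\sqrt{d}$ forces its diameter to be at most $\epsilon$, so $dist(q,y) \le \epsilon$; together with $dist(p,q) > \epsilon$ and $dist(p,y) \ge dist(p,q)$ (as $q$ is the nearest point of $s_j$ to $p$), this makes $qy$ the strictly shortest side of triangle $pqy$, so the angle opposite it, $\theta_2 = \angle qpy$, is the strictly smallest angle and hence $< \pi/3$. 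To make this rigorous and degeneracy-proof I would invoke the law of cosines: with $a = dist(q,y)$, $b = dist(p,y)$, $c = dist(p,q)$ one has $\cos\theta_2 = \frac{b^2+c^2-a^2}{2bc}$, and since $b^2 - bc + c^2 = (b-c)^2 + bc \ge bc > \epsilon^2 \ge a^2$, it follows that $\cos\theta_2 > 1/2$, i.e. $\theta_2 < \pi/3$. Adding the two bounds gives $\lambda_y = \theta_1 + \theta_2 < \pi/2 + \pi/3 = 5\pi/6$.

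I expect the genuine difficulty to lie not in the arithmetic but in justifying the two angular steps cleanly in general dimension $d$, since the figure only depicts the planar case. For the first claim one must verify that the tangent-cone characterization and the spherical triangle inequality carry over, which reduces to checking that the minimum distance from $y$ to the ray $\{p + t\,\overrightarrow{px} : t \ge 0\}$ equals $dist(p,y)\sin\angle ypx$ when $\angle ypx \le \pi/2$ and equals $dist(p,y)$ otherwise, so that in either case it exceeds $\epsilon$ once $\angle ypx > \theta_1$. For the second claim the delicate point is securing the \emph{strict} inequality $\theta_2 < \pi/3$, which rests on $dist(p,q) > \epsilon \ge dist(q,y)$; this is exactly what the hypothesis and the grid-diameter bound provide, and the law-of-cosines computation makes the strictness transparent while covering collinear configurations automatically. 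Everything else is routine.
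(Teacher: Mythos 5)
Your proof is correct and takes essentially the same route as the paper's: the same tangent-cone fact (if $dist(x,y)\le\epsilon$ then $\angle ypx \le \arcsin\frac{\epsilon}{dist(p,y)}$), the same angular decomposition $\angle qpx \le \angle qpy + \angle ypx$ (which the paper derives via the spherical law of cosines where you simply cite the triangle inequality for the angle metric on directions at $p$ --- the same fact), and the identical law-of-cosines arithmetic $b^2-bc+c^2 \ge bc > \epsilon^2 \ge dist(q,y)^2$ giving $\angle qpy < \pi/3$ and hence $\lambda_y < \pi/2 + \pi/3 = 5\pi/6$. If anything, you are slightly more careful than the paper on one point: by working with the ray $\{p + t\,\overrightarrow{px} : t\ge 0\}$ rather than the full line through $p$ and $x$, your argument handles the obtuse case $\angle ypx > \pi/2$ explicitly, whereas the paper's identity $\theta_2 = \arcsin(d_{l,y}/dist(p,y))$ tacitly assumes $\theta_2 \le \pi/2$.
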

\begin{proof}
For convenience, we use $\langle\overrightarrow{p q}, \overrightarrow{p x}\rangle$ to represent the angle between $\overrightarrow{p q}$ and $\overrightarrow{p x}$.

First, we prove that for every point $x\in s_i$ satisfies $dist(x,y)\leq \epsilon$, then
\begin{equation}
\label{thresh}
\theta_2 \leq \arcsin \frac{\epsilon}{dist(p,y)}
\end{equation}
where $\theta_2 = \langle\overrightarrow{p x}, \overrightarrow{p y}\rangle$ and $0\leq \theta_2\leq \pi$.
Let $C$ be the hypersphere with radius $\epsilon$ and $y$ as its center.
Denote by $l$ the straight line defined by $p$ and $x$.
Clearly, $l$ intersects with $C$. If $l$ intersects with $C$ at two points, then the distance from $y$ to $l$, denoted by $d_{l,y}$, will not greater than $\epsilon$ since the distance from the chord to the centre will be less than the radius.
If $l$ intersects with $C$ at one points, then $d_{l,y}=\epsilon$.
Therefore, the distance from $y$ to $l$ does not exceed $\epsilon$. So
\begin{IEEEeqnarray}{rCl}
\theta_2&&{}={}\arcsin \frac{d_{l,y}}{dist(p,y)}\IEEEnonumber\\
&&{}\leq {}\arcsin \frac{\epsilon}{dist(p,y)}.\IEEEnonumber
\end{IEEEeqnarray}

Second, we prove that if $\langle\overrightarrow{p q}, \overrightarrow{p x}\rangle > \lambda_y$, then $dist(x, y)$ is greater than $\epsilon$. Let $\theta_1 = \langle\overrightarrow{p q}, \overrightarrow{p y}\rangle$, $\gamma = \langle\overrightarrow{p q}, \overrightarrow{p x}\rangle$. Clearly, $0\leq \theta_1,\gamma\leq \pi$.
According to the spherical law of cosines\cite{todhunter1863spherical}, the following equation holds:
\begin{IEEEeqnarray}{rCl}
\cos{\gamma}{}={}&&\cos{\theta_1} \cos{\theta_2}+\sin{\theta_1} \sin{\theta_2} \cos{\omega}\IEEEnonumber
\end{IEEEeqnarray}
where $\omega$ is the angle between two intersecting planes. One of the planes is determined by $p,x,y$, and the surface of the unit sphere centered at $p$. The other plane is determined by $p,q,y$, and the surface of the unit sphere centered at $p$. Further, we have
\begin{IEEEeqnarray}{rCl}
\label{threeangle}
\cos{\gamma}&&{}\geq{}\cos{\theta_1} \cos{\theta_2}-\sin{\theta_1} \sin{\theta_2}\IEEEnonumber\\
&&{}={}\cos{(\theta_1 + \theta_2)}
\end{IEEEeqnarray}
since $\cos{\omega} \geq -1$ and $\sin{\theta_1} \sin{\theta_2} \geq 0$. Based on (\ref{threeangle}), we have
\begin{IEEEeqnarray}{rCl}
\label{theta2}
\theta_2&&{}\geq{}\gamma-\theta_1\IEEEnonumber\\
&&{}>{}\lambda_y - \theta_1\IEEEnonumber\\
&&{}={}\arcsin \frac{\epsilon}{dist(p,y)}.
\end{IEEEeqnarray}

According to the first step of the proof and (\ref{theta2}), we can conclude that $dist(x, y) > \epsilon$.

If $s_i$ and $s_j$ are the sets of core points in $g_i$ and $g_j$, respectively, where $g_i\in Nei(g_j)$.
Since $q$ and $y$ are in the same grid, hence $dist(q,y)\leq \epsilon$.
Let $a=dist(p,q),b=dist(p,y)$.
We have $\epsilon<a \leq b\leq 2\epsilon$ since $g_i\in Nei(g_j)$.
Then, we have
\begin{IEEEeqnarray}{rCl}
\lambda_y&&{}={}\arcsin \frac{\epsilon}{b} + \langle\overrightarrow{p q}, \overrightarrow{p y}\rangle\IEEEnonumber\\
&&{}<{}\arcsin 1 + \arccos \frac{a^2+b^2-dist(q,y)^2}{2ab}\IEEEnonumber\\
&&{}\leq{}\frac{\pi}{2} + \arccos \frac{a^2+b^2-\epsilon^2}{2ab}\IEEEnonumber\\
&&{}< \frac{\pi}{2} + \arccos \frac{1}{2}\IEEEnonumber\\
&&{}={}\frac{5\pi}{6}\IEEEnonumber
\end{IEEEeqnarray}
where the third inequality holds since $(a^2+b^2-\epsilon^2)/(2ab) > 1/2$.
This completes the proof.
\end{proof}

Based on the above findings, we propose the angle-based pruning strategy to further prune trivial points. Let
\begin{equation}
\lambda = \underset{y \in s_j}{\max}\lambda_y.
\end{equation}
Then, if the angle between $\overrightarrow{p q}$ and $\overrightarrow{p x}$ is greater than $\lambda$, we can conclude that $x$ is a trivial point without calculating the exact minimum distance between set $\{x\}$ and $s_j$.

Moreover, if $s_i$ and $s_j$ are the sets of core points in $g_i$ and $g_j$, respectively, then by combining Theorem \ref{angle lemma}, we have $\lambda<5\pi/6$.
It follows that there is always a non-empty area in the $g_i$ such that the points in this area are all trivial points. This allows us to further prune some trivial points in $s_i$ to reduce unnecessary distance calculations.

\begin{figure}[!t]
\centering
\includegraphics[width=2.5in]{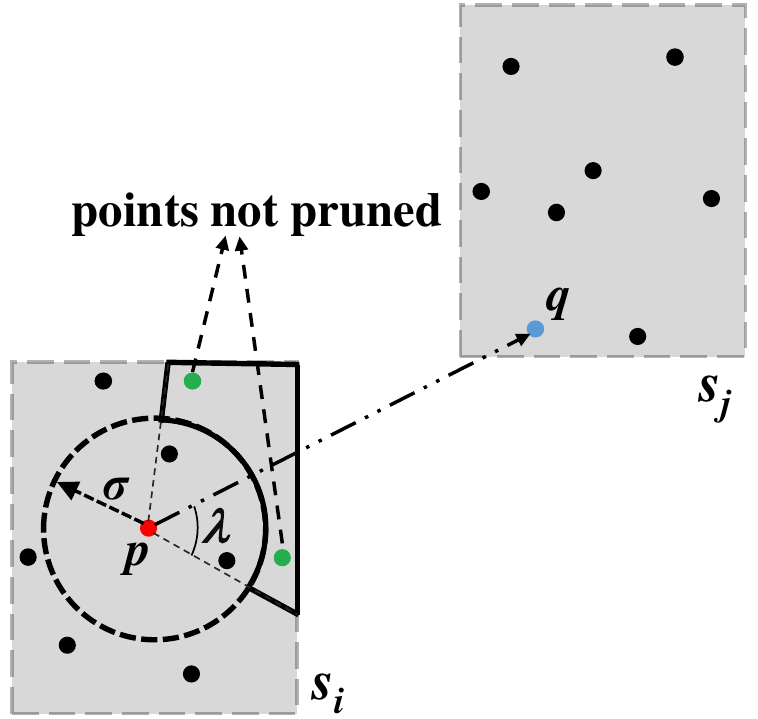}
\caption{An example of pruning trivial points in $s_i$.}
\label{boot}
\end{figure}

Based on the above two pruning strategies, we propose the pruning method which removes trivial points using the triangle inequality and the angle relationship among points.
For each point $x$ in $s_i$, $x$ is a trivial point if it lies in the $\sigma$-neighborhood of $p$ or the angle between $\overrightarrow{p q}$ and $\overrightarrow{p x}$ is greater than $\lambda$.
The pruning method based on spatial relationships, is summarized in Algorithm \ref{spa}.

\begin{algorithm}[ht]
	\caption{Pruning}
	\label{spa}
	\begin{algorithmic}[1]
        \REQUIRE point sets $s_i, s_j$; points $p, q$; $\epsilon$.
        \ENSURE Remove trivial points in $s_i$.
        \STATE $\lambda = 0$
        \FORALL{$y \in s_j$}
        \STATE Compute $\lambda_y$ by (\ref{lambda}).
        \IF {$\lambda_y > \lambda$}
        \STATE $\lambda = \lambda_y$
        \ENDIF
        \ENDFOR
		\FORALL{$x \in s_i$}
        \STATE compute $dist(x, p)$
        \IF {$dist(x, p) < dist(p, q) - \epsilon$}
        \STATE remove $x$ from $s_i$ /*Prune trivial points in $s_i$ using the triangle inequality pruning strategy.*/
        \ELSE
        \STATE $\theta = \arccos \frac{\overrightarrow{p q}\cdot \overrightarrow{p x}}{dist(p,q)\times dist(p,x)}$ /*Compute the angle between $\overrightarrow{p q}$ and $\overrightarrow{p x}$.*/
        \IF {$\lambda < \theta$}
        \STATE remove $x$ from $s_i$ /*Prune trivial points in $s_i$ using the angle-based pruning strategy.*/
        \ENDIF
        \ENDIF
        \ENDFOR
        \RETURN $s_i$
	\end{algorithmic}
\end{algorithm}

Consider the sets in Fig. \ref{basicIdea}.
The result of pruning trivial points in $s_i$ is illustrated in Fig. \ref{boot}.
Points inside the region surrounded by solid lines won't be pruned.
This is because the region is surrounded by vectors whose angle to $\overrightarrow{p q}$ equals $\lambda$ and an arc whose radius is $\sigma$ and center is $p$.
It follows that a point $x$ lying in this region is not a trivial point since $dist(p,x)\ge\sigma$ and the angle between $\overrightarrow{p q}$ and $\overrightarrow{p x}$ is less than $\lambda$.
Ultimately, seven of nine points in $s_i$ have been pruned, with three pruned by the angle-based pruning strategy and four pruned by the triangle inequality pruning strategy.
It is easier to determine whether $MinDist(s_i, s_j)$ is greater than $\epsilon$ when there are only two points in $s_i$.

We can now state the fast merging algorithm.
The fast merging algorithm uses the pruning method to remove trivial points when checking whether two grids can be merged, so as to reduce unnecessary distance calculations.
The pseudocode of the fast merging algorithm is summarized in Algorithm \ref{fma}.

\begin{algorithm}[ht]
	\caption{FastMerging}
	\label{fma}
	\begin{algorithmic}[1]
        \REQUIRE point sets $s_i, s_j$; $\epsilon$.
        \ENSURE If $MinDist(s_i, s_j)\leq \epsilon$, return yes. Otherwise, return no.
        \STATE Randomly selected a point from $s_i$, denoted by $p$.
        \REPEAT
        \STATE $q = \arg\underset{y \in s_j}{\min}dist(p,y)$
        \IF {$dist(p,q) \leq \epsilon$}
        \RETURN yes
        \ELSE
        \STATE /*Remove trivial points in $s_i$ by Algorithm \ref{spa}.*/
        \STATE $s_i = Pruning(s_i,s_j,p,q,\epsilon)$
        \ENDIF
        \STATE $p = \arg\underset{x \in s_i}{\min}dist(x,q)$
        \IF {$dist(p,q) \leq \epsilon$}
        \RETURN yes
        \ELSE
        \STATE /*Remove trivial points in $s_j$ by Algorithm \ref{spa}.*/
        \STATE  $s_j = Pruning(s_j, s_i,q,p,\epsilon)$
        \ENDIF
        \UNTIL{$|s_i|=0$ \textbf{or} $|s_j|=0$}
        \RETURN no
	\end{algorithmic}
\end{algorithm}

As Algorithm \ref{fma} shows, the fast merging algorithm consists of four steps:

\begin{enumerate}[\IEEEsetlabelwidth{12)}]
\item Let $p$ be a random point in $s_i$.
\item The algorithm finds $p$'s nearest point $q\in s_j$, and checks whether $dist(p, q) \leq \epsilon$.
If so, the algorithm returns yes. Otherwise, the algorithm removes all trivial points in $s_i$ by Algorithm \ref{spa}.
\item Update $p$ to $q$'s nearest point in $s_i$. If $dist(p, q) \leq \epsilon$, the algorithm returns yes.
Otherwise, all trivial points in $s_j$ will be removed using Algorithm \ref{spa}.
\item If one of the sets is empty, then return no. Otherwise, go back to step 2.
\end{enumerate}

Clearly, FastMerging is an iterative algorithm.
In each iteration, trivial points are removed, and the distance from these points will not be calculated in the subsequent steps.
This ensures that our algorithm runs efficiently.

\begin{theorem}
	\label{fma lemma}
    Algorithm \ref{fma} returns yes if and only if there are points $p\in s_i$ and $q\in s_j$ such that $dist(p, q)\leq \epsilon$.
\end{theorem}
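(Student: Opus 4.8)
The plan is to prove the two directions of the equivalence separately, treating the forward (soundness) implication as essentially immediate and devoting the real effort to the reverse (completeness) implication.

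For soundness (the ``only if'' direction), I would note that Algorithm \ref{fma} can output \emph{yes} only at one of its two in-loop return statements, and in each case it has just checked $dist(p,q)\le\epsilon$ for the current working points $p$ and $q$. Because every call to Algorithm \ref{spa} only \emph{deletes} elements, the working sets $s_i$ and $s_j$ are at all times subsets of the original inputs; hence the witnessing pair $(p,q)$ belongs to the original $s_i\times s_j$, which is exactly the object we must exhibit. This half requires no computation.

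For completeness (the ``if'' direction), suppose there is a pair $(p^\ast,q^\ast)$ with $p^\ast\in s_i$, $q^\ast\in s_j$, and $dist(p^\ast,q^\ast)\le\epsilon$. The crux is a correctness property of the pruning subroutine: Algorithm \ref{spa} discards from $s_i$ only \emph{trivial} points, i.e. points $x$ satisfying $MinDist(\{x\},s_j)>\epsilon$ in the sense of Definition \ref{md}. This is precisely what the two rules guarantee: the triangle-inequality rule deletes $x$ only when the chain (\ref{tri}) forces $dist(x,y)>\epsilon$ for all $y\in s_j$, and the angle rule deletes $x$ only when Theorem \ref{angle lemma} forces $dist(x,y)>\epsilon$ for every $y\in s_j$. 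I would package this as the loop invariant ``$MinDist(s_i,s_j)\le\epsilon$ is preserved by each prune,'' established by induction: whenever $s_i$ is pruned, $q^\ast$ still lies in the current $s_j$, so $p^\ast$ is non-trivial and survives; symmetrically, whenever $s_j$ is pruned, $p^\ast$ is still present, so $q^\ast$ is non-trivial and survives. Thus the particular witness $(p^\ast,q^\ast)$ is never deleted, and both sets stay non-empty.

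I would then close with a termination/progress step. In any iteration that does not already return \emph{yes}, we have $dist(p,q)>\epsilon$, so $\sigma=dist(p,q)-\epsilon>0$; the triangle-inequality rule then removes the current point $p$ itself, since $dist(p,p)=0<\sigma$, and by the symmetric call it removes $q$ from $s_j$. Hence every non-returning iteration strictly shrinks both $s_i$ and $s_j$. Since the witness points are never removed, neither set can empty out, so the loop's stopping condition $|s_i|=0$ or $|s_j|=0$ can never trigger; combined with the strict per-iteration decrease on finite sets, the only way the algorithm can halt is by reaching a \emph{yes} return. This gives completeness and closes the equivalence. I expect the main obstacle to be the invariant of the third paragraph --- arguing precisely that the alternating prunes of $s_i$ and $s_j$ never jointly strip both endpoints of the surviving witness --- which is exactly where the correctness of the two pruning rules, and in particular Theorem \ref{angle lemma}, does the work; the soundness direction and the termination bookkeeping are routine by comparison.
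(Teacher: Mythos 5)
Your proposal is correct and follows essentially the same route as the paper: the paper likewise dismisses the ``only if'' direction as immediate (yes is returned only after a successful check $dist(p,q)\le\epsilon$) and proves the ``if'' direction by observing that no is returned only when a set empties, while Algorithm \ref{spa} removes only trivial points. Your version merely makes explicit what the paper leaves implicit --- the induction showing the witness pair $(p^\ast,q^\ast)$ survives every prune (the key subtlety, since triviality is judged against the \emph{current} other set) and the termination argument that $p$ and $q$ themselves are removed in every non-returning iteration --- so it is a more rigorous rendering of the same argument.
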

\begin{proof}
Obviously, Algorithm \ref{fma} returns yes only if there are core points $p$ and $q$ satisfying $dist(p,q) \leq \epsilon$.
Therefore, if Algorithm \ref{fma} returns yes, grids $g_i$ and $g_j$ are density reachable from each other.

On the contrary, if Algorithm \ref{fma} returns no, then there are no $p$ and $q$ such that $dist(p,q)\leq \epsilon$.
This is because Algorithm \ref{fma} returns no only when one of the sets is empty, and recall that only the trivial points will be removed.
\end{proof}

\subsubsection{Complexity Analysis}\label{merging complexity}
Obviously, the complexity of the fast merging algorithm depends on two factors: (\romannumeral1) the number of distance calculations in each iteration, and (\romannumeral2) how many iterations it takes to terminate.

First, we show that the algorithm calculates the distance at most $m_i+m_j$ times in each iteration.
Let $r_{1,i}$ and $r_{2,i}$ be the number of points in $s_i$ and $s_j$ at the beginning of $i$th iteration respectively.
In each iteration, the algorithm calculates the distance $r_{1,i}+3r_{2,i}+2r_{1,i+1}$ times.
Then, we can conclude that, in each iteration, the algorithm calculates the distance at most $3(m_i+m_j)$ times since at least one point is removed from each set in each iteration.

Then, the complexity of the fast merging algorithm satisfies
\begin{IEEEeqnarray}{rCl}
O(\sum_{i=1}^{\kappa_{g_ig_j}}(r_{1,i}+3r_{2,i}+2r_{1,i+1})&&{}<{}O(\sum_{i=1}^{\kappa_{g_ig_j}}(m_i + m_j))\IEEEnonumber\\
&&{}={}O(\kappa_{g_ig_j}(m_i + m_j))\IEEEnonumber
\end{IEEEeqnarray}
where $\kappa_{g_ig_j}$ is the number of iterations.

Second, the following theorem gives the upper bound on the number of iterations.

\begin{theorem}
\label{com fma}
Given two core grids $g_i, g_j$, where $g_i\in Nei(g_j)$, let $s_i$ and $s_j$ be the sets of core points in $g_i$ and $g_j$, respectively.
Let $s_i^{\prime}$ denote the points in $s_i$ whose distances from each point in $s_j$ are all greater than $\epsilon$.
If $s_i^{\prime}$ is empty, then $\kappa_{g_ig_j}=1$. Otherwise, let
\begin{equation}
\tau = \underset{x \in s_i^{\prime}}{\min}\underset{y \in s_j}{\min}dist(x,y)-\epsilon.\nonumber
\end{equation}
Then $\kappa_{g_ig_j}$ is at most $V_g/V_{\tau/2}$, where $V_g$ is the volume of a grid which extends $\tau$ units to each side and $V_{\tau/2}$ is the volume of a ball with radius $\tau/2$.
\end{theorem}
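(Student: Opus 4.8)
The plan is to follow the pivot points that Algorithm~\ref{fma} draws from $s_i$ at the start of each iteration, prove that these pivots are pairwise at least $\tau$ apart, and finish with a volume-packing estimate. First I would dispose of the degenerate case: if $s_i'$ is empty then every point of $s_i$ has some point of $s_j$ within distance $\epsilon$, so the random seed $p\in s_i$ already satisfies $\min_{y\in s_j}dist(p,y)\le\epsilon$; the first nearest-neighbour query returns a $q$ with $dist(p,q)\le\epsilon$ and the algorithm answers yes on its first pass, giving $\kappa_{g_ig_j}=1$.

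For the main case, write $p^{(1)},p^{(2)},\dots,p^{(\kappa_{g_ig_j})}$ for the point of $s_i$ held in the variable $p$ at the beginning of each successive iteration, where $p^{(1)}$ is the random seed and $p^{(k+1)}=\arg\min_{x\in s_i}dist(x,q)$ is the point produced at the end of iteration $k$. Set $\sigma_k=dist(p^{(k)},q)-\epsilon$ for the value computed in iteration $k$. The key claim I would establish is that whenever iteration $k$ does not terminate, $p^{(k)}\in s_i'$, so that $\sigma_k\ge\tau$.

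The verification of this claim is the \emph{main obstacle}, because $s_j$ is itself being pruned as the algorithm runs, so the $q$ found in iteration $k$ is the nearest neighbour of $p^{(k)}$ in the \emph{current} (reduced) $s_j$ rather than in the original one; a priori $dist(p^{(k)},q)$ could exceed $\epsilon$ merely because a genuinely close point of $s_j$ was discarded earlier. To rule this out I would invoke the soundness of the two pruning rules established by~(\ref{tri}) and Theorem~\ref{angle lemma}: Algorithm~\ref{spa} deletes a point of $s_j$ only when it is trivial, i.e.\ farther than $\epsilon$ from \emph{every} point of the then-current $s_i$. Now every pivot $p^{(k)}$ is still present in $s_i$ throughout all earlier iterations, since it is selected as a pivot only later and hence was never pruned before iteration $k$. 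Consequently any point of the original $s_j$ lying within $\epsilon$ of $p^{(k)}$ is non-trivial relative to every intermediate $s_i$ and is therefore never removed; had such a point existed, the nearest-neighbour query in iteration $k$ would have returned distance $\le\epsilon$ and the algorithm would have terminated. Thus non-termination of iteration $k$ forces $\min_{y\in s_j}dist(p^{(k)},y)>\epsilon$ over the original $s_j$, that is $p^{(k)}\in s_i'$, whence $dist(p^{(k)},q)\ge\epsilon+\tau$ and $\sigma_k\ge\tau$.

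With the claim in hand the rest is short. The triangle-inequality step of Algorithm~\ref{spa} deletes from $s_i$ every point within distance $\sigma_k$ of $p^{(k)}$, so each later pivot $p^{(l)}$ with $l>k$, being a survivor of that deletion, satisfies $dist(p^{(k)},p^{(l)})\ge\sigma_k\ge\tau$; hence the pivots are pairwise at distance at least $\tau$. Centering a ball of radius $\tau/2$ at each pivot then yields $\kappa_{g_ig_j}$ interior-disjoint balls, and since every pivot lies inside the grid $g_i$ each such ball is contained in the region obtained by extending $g_i$ by $\tau$ on every side. Comparing volumes gives $\kappa_{g_ig_j}\,V_{\tau/2}\le V_g$, i.e.\ $\kappa_{g_ig_j}\le V_g/V_{\tau/2}$, as claimed.
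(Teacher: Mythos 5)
Your proof is correct and follows essentially the same route as the paper's: the empty-$s_i^{\prime}$ case is dispatched identically, each non-terminating iteration is shown to remove at least a $\tau$-ball around the current pivot (so successive pivots are pairwise $\geq\tau$ apart), and the iteration count is bounded by packing $\tau/2$-balls into the grid extended by $\tau$ on each side --- the paper obtains this last bound by citing the leaders analysis of Rough-DBSCAN rather than arguing the packing directly. If anything, your version is more careful than the paper's: you explicitly justify that a pivot $p^{(k)}$ at a non-terminating iteration must lie in $s_i^{\prime}$ even though $s_j$ has been pruned in earlier rounds (via the soundness of the pruning rules and the fact that $p^{(k)}$ survives in $s_i$ until it is selected), a point the paper's proof leaves implicit.
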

\begin{proof}
If $s_i^{\prime}$ is empty, then in the 1th iteration $p\notin s_i^{\prime}$.
Based on the definition of $s_i^{\prime}$, there is at least one point $q\in s_j$ such that $dist(p,q)\leq \epsilon$.
Therefore, the algorithm returns yes in the 1th iteration, and $\kappa_{g_ig_j}=1$ holds.

On the contrary, $s_i^{\prime}$ is non-empty. If $p\in s_i^{\prime}$, all points inside the $\sigma$-ball centered at $p$ or whose angle exceeds $\lambda$ will be removed.
By the definition of $\tau$, we have $\sigma \geq \tau$.
As a result, in each iteration, if $dist(p,q)>\epsilon$, at least the points inside the $\tau$-ball centered at $p$ are removed.
Assume that in each iteration, only the points inside the $\tau$-ball centered at $p$ are removed. Then the number of iterations $\kappa_{g_ig_j}$ is not greater than the number of leaders in $s_i^\prime$ using $\tau$ as parameter.
According to the analysis of \cite{viswanath2009rough}, we have $\kappa_{g_ig_j}\leq V_g/V_{\tau/2}$.
This completes the proof.

\end{proof}

\begin{remark}
It should be noted that the upper bound on $\kappa_{g_ig_j}$ is exponential to the dimension $d$.
Nevertheless, there are two reasons that $\kappa_{g_ig_j}$ will not meet the upper bound.
First, it should be noted that if $s_i\backslash s_i^{\prime}$ is non-empty, the algorithm will return yes quickly.
The reason is that $q$ ($p$) is updated to the nearest point to $p$ ($q$) in $s_j$ ($s_i$), so the algorithm will obtain local optimal value quickly and return yes immediately if the local optimal value is not greater than $\epsilon$.
Second, in each iteration, the volume of the non-empty area from which the points will be removed is greater than the volume of a ball with radius $\tau$, so the algorithm will terminate faster than the version described in the proof of Theorem \ref{com fma} which only removes the points inside the $\tau$-ball centered at $p$.
\end{remark}

\begin{remark}\label{numerical error}
If slight inaccuracy in the results is acceptable, then there is an approximate FastMerging algorithm which runs in $O(1)$, regardless of the value of $\epsilon$ and the dimension $d$.
For an arbitrary constant $\delta>0$.
The approximate FastMerging algorithm differs from FastMerging only in line 4 and line 10.
In approximate FastMerging, it returns yes if $dist(p,q)\leq \epsilon+\delta$.
It follows that $\tau> \delta$ and the upper bound on the number of iterations satisfies $V_g/V_{\tau/2}< V_g/V_{\delta/2} = O(1)$, regardless of the value of $\epsilon$ and the dimension $d$.
\end{remark}

In summary, the complexity of the fast merging algorithm is linear to $m_i+m_j$.

\subsection{Overall Algorithm}\label{overall algorithm}

\begin{algorithm}[ht]
	\caption{GriT-DBSCAN}
	\label{FASTERDBSCAN}
	\begin{algorithmic}[1]
        \REQUIRE point sets $P$; parameters: $MinPts, \epsilon$.
        \ENSURE $C$: the clustering result.
        \STATE /*step 1: partitioning*/
        \STATE $Gs = Partitioning(P,\epsilon$)
        \STATE $T = BuildingGridTree(Gs)$.
        \STATE For each non-empty grid $g_i\in Gs$, use Algorithm \ref{ngq} to find its non-empty neighboring grids $Nei(g_i)$.
        \STATE /*step 2: identify core points*/
        \STATE Identify all core points in the data set like G13.
        \STATE /*step 3: merging*/
        \STATE Each core grid is marked as $unclassified$.
        \FORALL {core grid $g\in Gs$}
            \IF {$g$ is $unclassified$}
            \STATE Mark $g$ as $classified$.
            \STATE $seeds = \{g\}$
            \STATE $pos = 1$
                \WHILE {$pos \leq seeds.size()$}
                    \STATE $cur = seeds[pos]$
                    \FORALL {$unclassified$ core grid $g^\prime \in Nei(cur)$}
                        \STATE $s$ and $s^\prime$ are the sets of core points in $cur$ and $g^\prime$, respectively.
                        \IF {$FastMerging(s,s^\prime)=yes$}
                        \STATE Mark $g^\prime$ as $classified$.
                        \STATE $seeds = seeds \cup \{g^\prime\}$
                        \ENDIF
                    \ENDFOR
                \ENDWHILE
                \STATE All grids in $seeds$ form a cluster.
            \ENDIF
        \ENDFOR
        \STATE /*step 4: assign non-core points*/
        \STATE For each non-core point, check whether it is a noise or a border point.
        \RETURN $C$
	\end{algorithmic}
\end{algorithm}

Now, we can introduce the GriT-DBSCAN algorithm with complexity linear to the data set size by combining the above techniques.
Our algorithm consists of four steps as below.

First, the data set $P$ is partitioned using Algorithm \ref{partition}.
All non-empty grids are organized in a grid tree constructed by Algorithm \ref{gridTree}.
For each grid, its non-empty neighboring grids are found by Algorithm \ref{ngq} and stored in a vector.

Second, all core points in the data set are identified like G13.

Third, merge core grids to form clusters. In this step, we use Algorithm \ref{fma} to check whether two core grids can be merged.

At last, non-core points are identified as border points or noise points.
A non-core point is identified as a border point if and only if there is at least one core point in its $\epsilon$-neighborhood.
A non-core point is a noise point if it is not a border point.

The exact DBSCAN algorithm with complexity linear to the data set size is summarized in Algorithm \ref{FASTERDBSCAN}.

For the correctness and the time complexity of the GriT-DBSCAN algorithm, we present the following theorem.

\begin{theorem}
\label{correct}
The GriT-DBSCAN algorithm runs in $O(\kappa n + \eta)$ expected time, regardless of the value of $MinPts$ and the dimension $d$, where $\kappa=\max\{\kappa_{g_i,g_j}|g_i,g_j\in Gs, g_j \in Nei(g_i)\}$ is the maximum number of iterations in the merging step.
In addition, the clustering result of GriT-DBSCAN is consistent with the result of DBSCAN.
\end{theorem}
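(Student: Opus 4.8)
The plan is to establish the two claims of Theorem \ref{correct} in turn: that GriT-DBSCAN reproduces the exact DBSCAN clustering, and that it runs in $O(\kappa n+\eta)$ expected time. For correctness I would argue by reduction to G13 \cite{gunawan2013faster}, which is already known to compute the exact clustering and which GriT-DBSCAN mirrors phase for phase (partition, core identification, merging of core grids, assignment of non-core points). The only two components in which GriT-DBSCAN departs from G13 are the neighbor-grid enumeration (Algorithm \ref{ngq}) and the merge test (Algorithm \ref{fma}), so it suffices to show each is semantically equivalent to its G13 counterpart.

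First I would verify that $NeighboringGridsQuery(T,g_i)$ returns exactly $\mathcal{N}_{\epsilon}(g_i)\cap Gs$. The offset accumulated along a root-to-leaf path equals $\sum_k\max\{|g_{jk}-g_{ik}|-1,0\}^2$, and the minimum distance between $g_i$ and a grid of offset $o$ is $\sqrt{o}\,\epsilon/\sqrt d$; hence a grid can hold an $\epsilon$-neighbor of some point of $g_i$ precisely when its offset does not force this distance above $\epsilon$, which is the surviving set after the offset pruning (here I would check the boundary offset $=d$ case so that no admissible grid is dropped). Because every $\epsilon$-neighbor therefore lies in an enumerated grid, core-point identification agrees with G13. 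For the merge test, Theorem \ref{fma lemma} already gives that $FastMerging(s_i,s_j)$ returns yes iff there exist core points $p\in s_i$, $q\in s_j$ with $dist(p,q)\le\epsilon$, which is exactly the condition of Definition \ref{merge}. Since the flood-fill in Algorithm \ref{FASTERDBSCAN} then collects core grids into the connected components of the same merge graph G13 forms, and the final step applies the identical border/noise rule to non-core points, the output meets the maximality and connectivity requirements of Definition \ref{cluster} and coincides with DBSCAN.

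For the running time I would bound each phase with $d$ fixed, so that the dimension-dependent factors are constants. Partitioning is $O(n+\eta)$; Algorithm \ref{bgt} builds the tree in $O(d^{3/2}|Gs|)=O(n)$ expected time, the expectation arising from hash-table insertion; and one call of Algorithm \ref{ngq} per grid totals $\sum_g O(d(2\lceil\sqrt d\rceil+1)^d)=O(n)$ since $|Gs|\le n$. The governing term comes from merging: summing the per-pair estimate $O(\kappa_{g_ig_j}(m_i+m_j))$ of Section \ref{merging complexity} over all tested pairs, and using that for fixed $d$ each grid has only $O(1)$ non-empty neighboring grids and hence participates in $O(1)$ merge tests, the sum collapses to $O(\kappa\sum_g m_g)=O(\kappa n)$ with $\kappa$ the maximum iteration count of Theorem \ref{com fma} and $\sum_g m_g\le n$. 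Adding the contributions of core identification and non-core assignment then yields the claimed $O(\kappa n+\eta)$.

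The step I expect to be the main obstacle is showing that core identification (and the symmetric non-core assignment) is truly linear with no hidden $MinPts$ factor. Charging the distance evaluations naively --- each point of a sparse grid against every point of its $O(1)$ neighboring grids --- gives only $O(MinPts\cdot n)$, because a sparse grid has up to $MinPts-1$ points and a neighboring grid may be densely populated. Extracting the $MinPts$-free bound requires leaning on the early-termination rule (a point is declared core once $MinPts$ neighbors are seen) together with the offset-sorted examination order that inspects the closest grids first, and formalizing a charging argument in which this saved work absorbs the $MinPts$ dependence into the linear term; verifying that this argument remains valid for non-core points, where early termination never triggers, is the delicate part. The remaining care is in confirming that the $O(1)$-neighbors-per-grid fact invoked in the merging sum holds uniformly, independent of the data distribution, for every fixed $d$.
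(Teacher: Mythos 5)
Your correctness argument takes a genuinely different route from the paper's. The paper does not reduce to G13: it proves correctness from first principles, taking an arbitrary cluster $c$ output by Algorithm \ref{FASTERDBSCAN} and verifying the two conditions of Definition \ref{cluster} directly --- maximality by chasing a density-reachability chain $p_1,\dots,p_o$ through the grids containing its points and showing consecutive grids land in the same connected component of the merge graph, and connectivity by lifting a path of mergeable grids back to a chain of core points at pairwise distance at most $\epsilon$. Your reduction is sound in spirit, and your checklist for the two components where GriT-DBSCAN departs from the G13 template is exactly right: the offset-pruning boundary case in Algorithm \ref{ngq} (grids at offset exactly $d$ have minimum point distance strictly greater than $\epsilon$ because the cells are half-open under the floor assignment, so dropping them loses nothing), and Theorem \ref{fma lemma} matching Definition \ref{merge}. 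But note that \cite{gunawan2013faster} establishes G13 only for $d=2$, so ``reduce to G13'' does not by itself cover general $d$; to make your reduction rigorous you would have to re-prove the grid-graph-equals-DBSCAN equivalence in arbitrary dimension, which is precisely the content of the paper's direct argument. Your last worry, that each grid has $O(1)$ non-empty neighbors uniformly, is a non-issue: $|Nei(g_i)|\leq(2\lceil\sqrt{d}\rceil+1)^d$ holds deterministically for every data distribution, and this is all the paper's merging sum uses.

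The genuine gap is your designated ``main obstacle.'' You plan to remove the hidden $MinPts$ factor from core identification and non-core assignment via a charging argument based on early termination and the offset-sorted grid order. No such argument can succeed in the worst case: place $\Theta(n/k)$ points in each of $k$ dense grids, and next to each dense grid a sparse grid holding $MinPts-1$ points, all of whose $\epsilon$-neighborhoods contain fewer than $MinPts$ points; each sparse point must scan the entire adjacent dense grid, early termination never fires (these points are non-core), and the total cost of steps 2 and 4 is genuinely $\Theta(MinPts\cdot n)$. The paper does not attempt your charge at all: it simply cites the $O(n\cdot MinPts)$ bound from \cite{gunawan2013faster} for these two steps and then writes $O(n+\eta+\kappa n+MinPts\cdot n)=O(\kappa n+\eta)$ --- that is, it absorbs $MinPts$ into the big-$O$ constant (as G13 and \cite{gan2017hardness} implicitly do), notwithstanding the ``regardless of the value of $MinPts$'' phrasing of Theorem \ref{correct}. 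So at your sticking point the correct move is not a cleverer accounting but the convention the paper adopts; with that convention, your treatment of partitioning ($O(n+\eta)$), grid-tree construction and queries ($O(n)$ expected for fixed $d$), and the merging sum collapsing to $O(\kappa n)$ coincides with the paper's proof step for step.
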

\begin{proof}
First, we prove the correctness of our algorithm, that is, the clustering result of GriT-DBSCAN is consistent with the result of DBSCAN.
Let $c$ be an arbitrary cluster found by the GriT-DBSCAN.
\begin{enumerate}[\IEEEsetlabelwidth{12)}]
\item For every core point $p\in c$ and any point $q\in P$ such that $q$ is density-reachable from $p$ wrt. $\varepsilon,MinPts$.
By the definition of density-reachable, there is a sequence of points $p_1,p_2,...,p_o\in P$ such that $p_1 = p$, $p_o = q$, and $p_{i+1}$ is directly density-reachable from $p_i$ for each $1\leq i\leq o-1$.
Furthermore, $p_1,p_2,...,p_{o-1}$ are core points and $dist(p_i,p_{i+1})\leq \varepsilon$ for $1\leq i\leq o-1$.
Denote by $g_{(i)}$ the grid in which $p_i$ lies, where $i=1,2,...,o$.
It follows from the definition of $G$ that $g_{(i)}$ and $g_{(i+1)}$ must be in the same connected component for $1\leq i\leq o-2$.
Hence, $p_1,p_2,...,p_{o-1}$ are in $c$.
If $p_o$ is a non-core point, then in the last step of GriT-DBSCAN, $p_o$ will also be assigned to $c$.
Otherwise, $g_{(o)}$ and $g_{(o-1)}$ must be in the same connected component of $G$.
It thus follows that $p_o\in c$.
Therefore, $p$ and $q$ are in the same cluster of GriT-DBSCAN.
\item Let $p$ and $q$ be two arbitrary points in $c$. We will show that $p$ is density-connected to $q$.
    If $p$ is a non-core point, based on the last step of GriT-DBSCAN, there is a core point $p^\prime\in c$ such that $p$ is directly density-reachable from $p^\prime$.
    Otherwise, we set $p^\prime = p$.
    Similarly, let $q^\prime$ be a core point in $c$ such that $q$ is directly density-reachable from $q^\prime$.
    Denote by $g_{p^\prime}$ and $g_{q^\prime}$ the grids covering $p^\prime$ and $q^\prime$, respectively.
    Since $p^\prime,q^\prime\in c$, $g_{p^\prime}$ and $g_{q^\prime}$ must be in the same connected component of $G$.
    Then, there is a sequence $g_{(1)},g_{(2)},...,g_{(o)}$ such that $g_{(1)}= g_{p^\prime}, g_{(o)}=g_{q^\prime}$.
    For each $i\in[1,o-1]$, $g_{(i)}$ and $g_{(i+1)}$ can be merged, which means that there are two core points $p^2_{i}\in g_{(i)}$ and $p^1_{i+1}\in g_{(i+1)}$ satisfying $dist(p^2_{i},p^1_{i+1})\leq \varepsilon$.
    Moreover, we have $dist(p^1_{i},p^2_{i})\leq \varepsilon$ since $p^1_{i},p^2_{i}$ in the same grid.
    Therefore, $p^\prime$ and $q^\prime$ are density-reachable from each other wrt. $\varepsilon, MinPts$.
    It follows that $p$ is density-connected to $q$ wrt. $\varepsilon, MinPts$.
\end{enumerate}
This completes the proof of the correctness.

Next, we will show that the GriT-DBSCAN algorithm runs in $O(\kappa n + \eta)$.
According to the analysis of \cite{gunawan2013faster}, the second step and the fourth step of Algorithm \ref{FASTERDBSCAN} runs in $O(n\cdot MinPts)$.
It takes $O(n+\eta)$ to partition the feature space.
For any fixed $d$, the expected time complexity of building the grid tree is $O(|Gs|)$. And the worst case time complexity of finding non-empty neighboring grids is $O(d\cdot(2\lceil\sqrt{d}\rceil+1)^d) = O(1)$.
It follows that the expected complexity of finding the non-empty neighboring grids for all grids is $O(|Gs|)$.
So the first step runs in $O(n)$ expected time since $|Gs| \leq n$.
In the third step, for any two core grids $g_i, g_j$, where $g_j\in Nei(g_i)$, we need to check whether they can be merged.
It follows from the analysis of Section \ref{merging complexity} that the time complexity of the third step is less than
\begin{IEEEeqnarray}{rCl}
\sum_{\substack{g_i\in Gs\\ g_j \in Nei(g_i)}}O(\kappa (m_i + m_j))&&{}={}\sum_{\substack{g_i\in Gs\\ g_j \in Nei(g_i)}}O(2 \kappa m_i) \IEEEnonumber\\
&&{}={}\sum_{g_i\in Gs}\sum_{\substack{g_j \in Nei(g_i)}}O(\kappa m_i) \IEEEnonumber\\
&&{}={}\sum_{g_i\in Gs}O(\kappa m_i) \IEEEnonumber\\
&&{}\leq {}O(\kappa n)\IEEEnonumber
\end{IEEEeqnarray}
where $m_i$ and $m_j$ are the number of core points in $g_i$ and $g_j$, respectively.
And the third equality is based on the fact that $g_i$ has $O(1)$ neighboring grids as long as $d$ is fixed.

In summary, the GriT-DBSCAN algorithm runs in $O(n+\eta + \kappa n + MinPts\cdot n) = O(\kappa n + \eta)$ expected time, regardless of the value of $MinPts$ and the dimension $d$.
This completes the proof.
\end{proof}

\begin{remark}
It is worth noting that the constant in the expected complexity of GriT-DBSCAN is at the order of $d\cdot(2\lceil\sqrt{d}\rceil+1)^d$, which is exponential to $d$.
Therefore, our algorithm is only suitable for low-dimensional data.
Moreover, in the experiments, $\kappa\leq 11$ is much smaller than the number of data points.
\end{remark}


\begin{remark}
Here we present our theoretical findings.
First, there is an exact DBSCAN algorithm that runs in $O(\kappa n)$ expected time, regardless of the value of $MinPts$ and the dimension $d$.
The algorithm differs from $\rho$-approximate DBSCAN only in the merging step.
We use the third step in GriT-DBSCAN to merge core grids. Following the analysis of \cite{gan2017hardness} and Theorem \ref{correct}, the algorithm runs in $O(\kappa n)$ expected time and the clustering result is consistent with the result of DBSCAN.
Second, if small inaccuracy in the clustering results are acceptable, this algorithm can be further modified to be an approximate DBSCAN algorithm running in $O(n)$ expected time according to Remark \ref{numerical error}, regardless of the value of $\epsilon$, $MinPts$, and the dimension $d$.
\end{remark}

\section{Experiments} \label{experiments}
Extensive experiments are conducted to evaluate GriT-DBSCAN and its two variants defined in Section \ref{exp1} by comparing with existing algorithms.
All the experiments are implemented on a machine equipped with a 2.5GHz CPU and 16GB memory using C++.
The operation system is Windows 10 64-bit.

\subsection{Data Sets and Parameter Settings}
In order to investigate the performance of the proposed algorithms, we conducted experiments on synthetic and real-world data sets.
In all data sets, we normalize each column to the integer domain of $[0,10^5]$.

The synthetic data sets are generated using the seed spreader (SS) generator produced by Gan and Tao \cite{gan2015dbscan, gan2017hardness}.
The seed spreader maintains a location when generating a synthetic data set.
It generates points uniformly in the neighborhood of the current location, and jumps to a random location with a certain probability.
In addition, seed spreader can generate data sets with either similar or variable density clusters.
We denote data sets with similar density clusters and variable density clusters as SS-simden and SS-varden, respectively.
Each synthetic data set contains 2 million points (unless specified otherwise), and $d\in \{2,3,5,7\}$.

We also use three real-world data sets to evaluate the performance of our algorithm.
PAM4D is a 4-dimensional data set with $3,850,505$ points, obtained by taking the first 4 principle components after performing PCA on the PAMAP2 data set \cite{uci}.
Farm is a 5-dimensional data set with $3,627,086$ points containing the VZ-features \cite{varma2003texture} of a satellite image of a farm in Saudi Arabia\footnote{http://www.satimagingcorp.com/gallery/ikonos/ikonos-tadco-farms-saudi-arabia.}.
House is a 7-dimensional data set with $2,049,280$ points obtained from UCI \cite{uci}, excluding date and time information.

For the synthetic data sets, we select the default values of $\epsilon$ and $MinPts$ to be those that produce the correct clustering results.
The default parameters for real-world data sets are similar to those found by Gan and Tao \cite{gan2017hardness}.
Unless specified otherwise, we use $\rho=0.01$ in the appr-DBSCAN algorithm.
\subsection{Experiments for $d\geq 3$} \label{exp1}
For $d\geq 3$, we compare the performance of the following algorithms:
\begin{itemize}
\item gan-DBSCAN \cite{gan2017hardness} is a grid-based exact DBSCAN for $d\geq 3$.

\item appr-DBSCAN \cite{gan2017hardness} is the state-of-the-art grid-based approximation DBSCAN reviewed in Section \ref{appr-dbscan}.

\item BLOCK-DBSCAN \cite{CHEN2021107624} is an approximation DBSCAN algorithm.
It first uses $\epsilon/2$-norm ball to identify inner core blocks, outer core points, and border points.
Second, it merges density-reachable inner core blocks into one cluster by an approximation algorithm.
Then, each outer core point is merged into a cluster which it is density-reachable.
Finally, border points will be assigned to corresponding clusters.

\item BLOCK-DBSCAN-FM is a variant of our algorithm obtained by combining FastMerging with BLOCK-DBSCAN.
BLOCK-DBSCAN-FM differs from BLOCK-DBSCAN only in the second step.
BLOCK-DBSCAN-FM utilizes cover tree \cite{beygelzimer2006cover} to index touch inner core points for fast $2\epsilon$-neighborhood queries.
Furthermore, our FastMerging algorithm developed in Section \ref{merging two grids} is used to merge density-reachable inner core blocks.
Based on Theorem \ref{fma lemma}, BLOCK-DBSCAN-FM is an exact DBSCAN algorithm.

\item GriT-DBSCAN is the exact DBSCAN algorithm we proposed in Section \ref{overall algorithm}.

\item GriT-DBSCAN-LDF is another variant of our algorithm by incorporating heuristics.
GriT-DBSCAN-LDF is an exact DBSCAN algorithm that differs from GriT-DBSCAN only in the merging step.
In GriT-DBSCAN-LDF, core grids are organized using union-find data structure \cite{tarjan1979class}.
In addition, the core grids are sorted in ascending order according to the number of core points using radix sort.
Then, these core girds are traversed in ascending order.
Fix a core grid $g_i$.
For each core grid $g_j\in Nei(g_i)$, we first check whether they are in the same set in the union-find data structure.
If so, we do nothing. Otherwise, we perform a "union" operation if they can be merged.
Finally, the core grids that belong to the same set in the union-find data structure belong to the same cluster.
(The reason why these core girds are traversed in ascending order is that we first perform merging checks on low-density core grids such that the cluster established soon. Consequently, high-density core grids can skip the merging checks since the cluster is already established \cite{boonchoo2019grid}. This reduces redundant merging checks.)
\end{itemize}

For gan-DBSCAN and appr-DBSCAN, we use the binary code which is written in C++ \cite{binaryCode}.
For BLOCK-DBSCAN, we use the source code written in C++ and publicly available \cite{block}.

\begin{figure*}[!ht]
\centering
\includegraphics[width=6.5in]{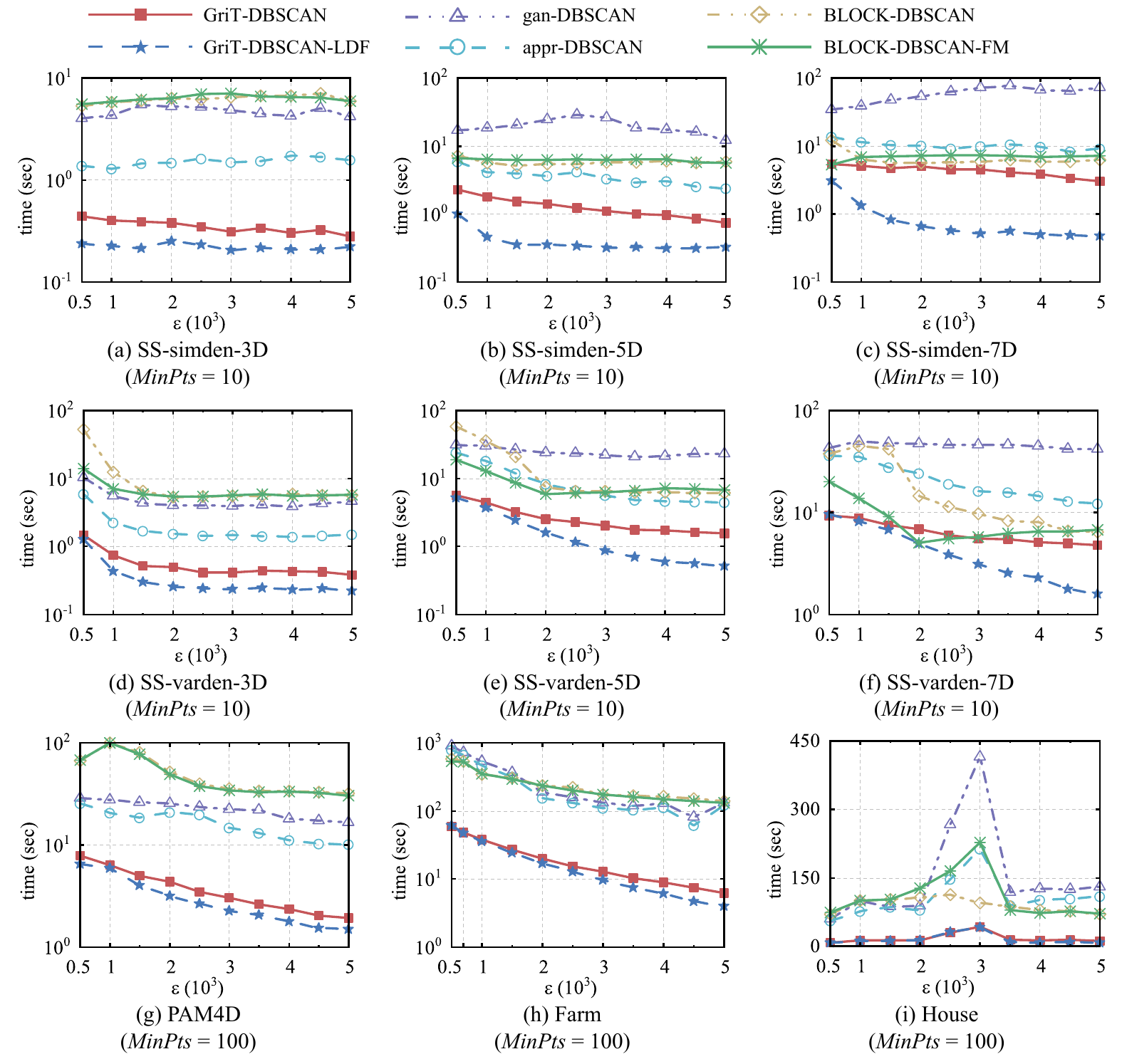}
\caption{Running time vs. $\epsilon$.}
\label{epsilon}
\end{figure*}

\begin{figure*}[!t]
\centering
\includegraphics[width=6.5in]{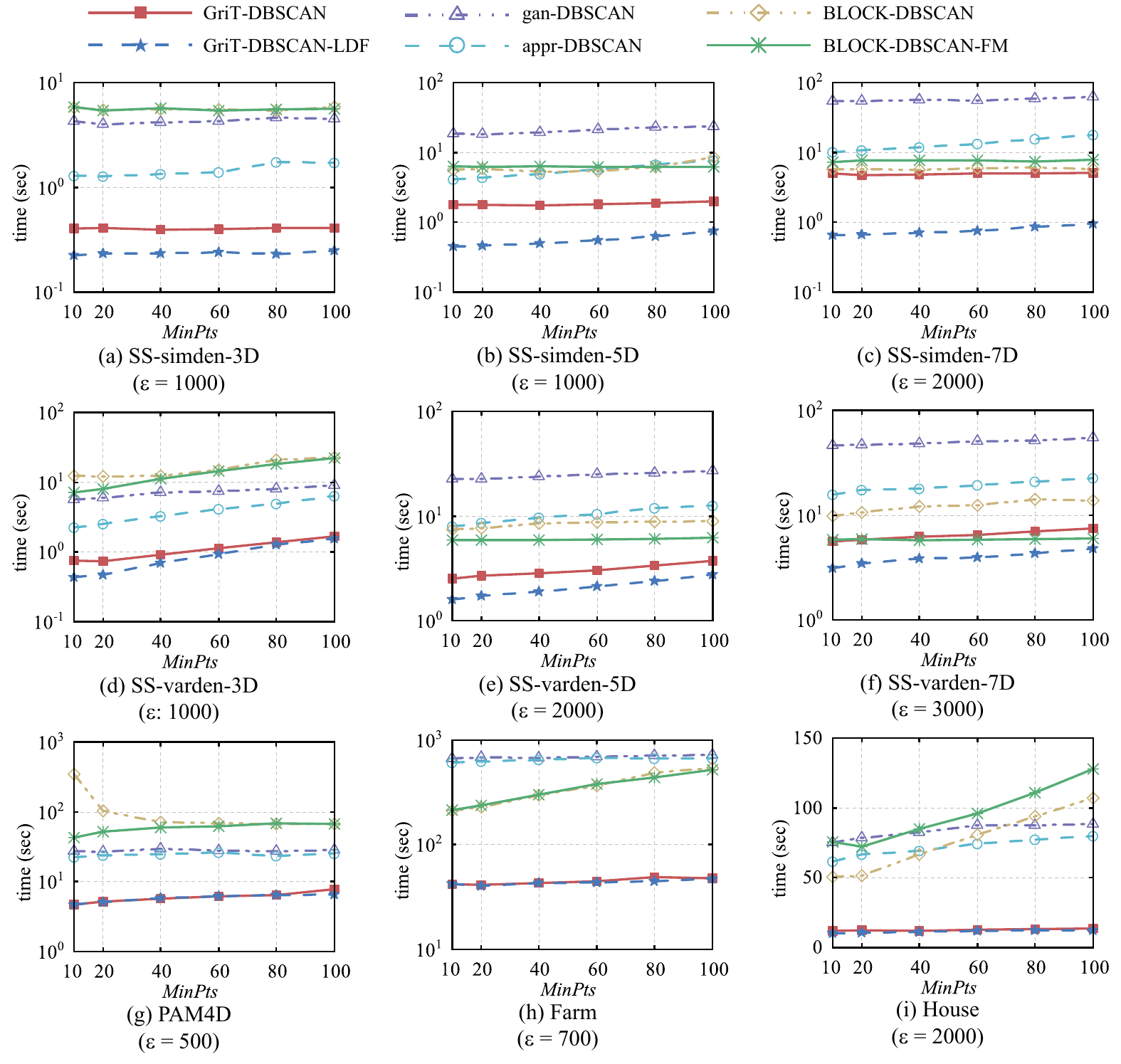}
\caption{Running time vs. $MinPts$.}
\label{minpts}
\end{figure*}

\begin{figure*}[!t]
\centering
\includegraphics[width=6.5in]{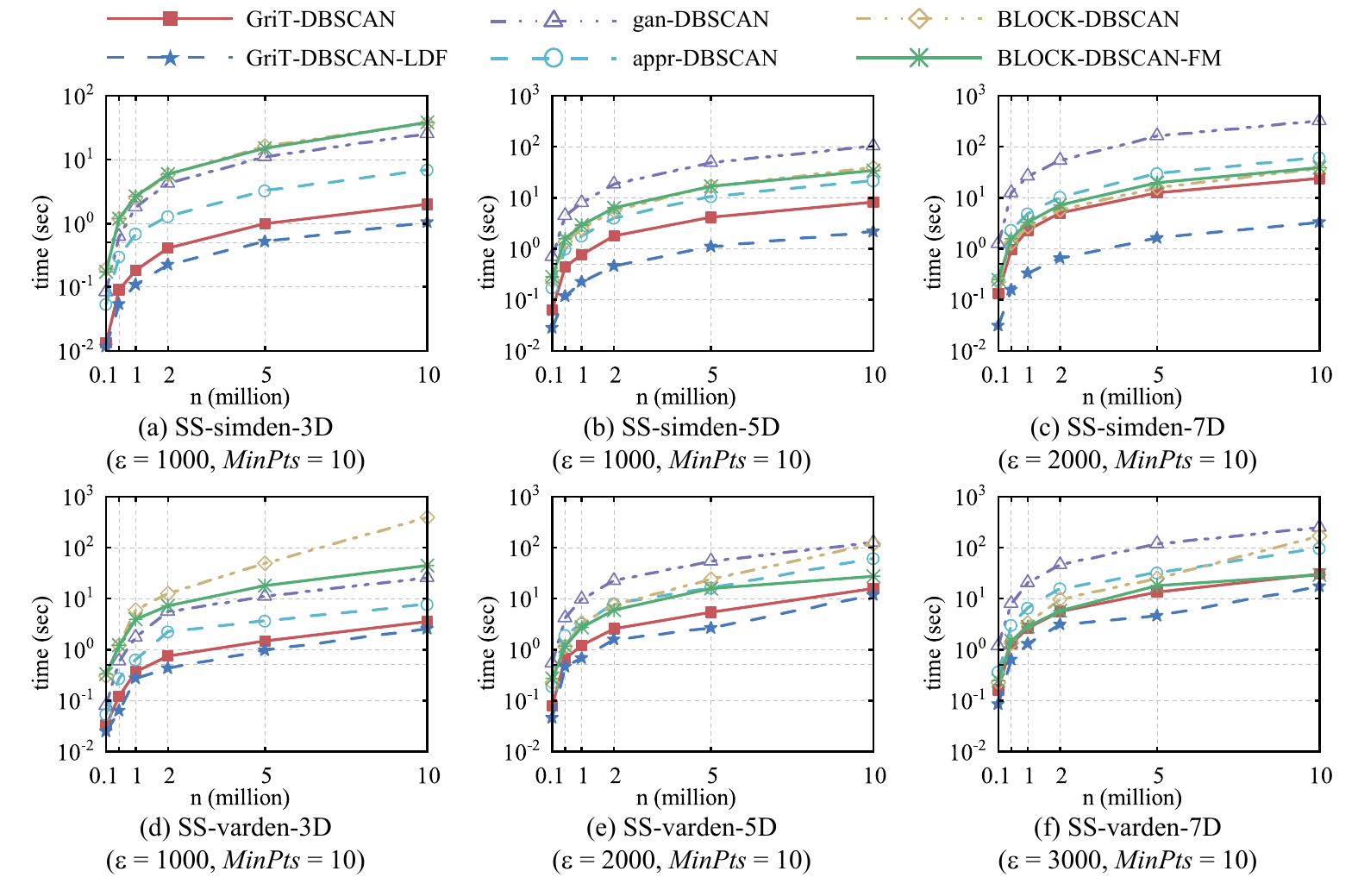}
\caption{Running time vs. $n$.}
\label{n}
\end{figure*}
\noindent\textbf{Influence of $\epsilon$.}
The first set of experiments aim to determine how $\epsilon$ affects the running time of each algorithm.
We fix $MinPts$ to the default value corresponding to the correct clustering result, and vary the parameter $\epsilon$ from 500 to 5000.
Fig. \ref{epsilon} shows the running time of GriT-DBSCAN, GriT-DBSCAN-LDF, gan-DBSCAN, appr-DBSCAN, BLOCK-DBSCAN, and BLOCK-DBSCAN-FM with different $\epsilon$.

For BLOCK-DBSCAN and BLOCK-DBSCAN-FM, as discussed in \cite{CHEN2021107624}, there is a trade off between the number of points filtered and the cost of range query: a large $\epsilon$ can filter large number of points, while the cost of range query get more expensive with $\epsilon$ increasing.
From Fig. \ref{epsilon}, the first effect dominates in most cases.
Therefore, the running time of BLOCK-DBSCAN and BLOCK-DBSCAN-FM tended to decrease as $\epsilon$ increases.
Particularly, BLOCK-DBSCAN-FM, with our FastMerging algorithm for fast merging check and the cover tree to index touch inner core points, obtains the accurate clustering results with nearly the same running time as BLOCK-DBSCAN or even less.
BLOCK-DBSCAN-FM's superiority to BLOCK-DBSCAN primarily depended on two factors.
The first factor is whether $\epsilon$ is small or large.
BLOCK-DBSCAN becomes slower as $\epsilon$ decreases because there are more touch inner core point, which increases the running time of using linear search to find the $2\epsilon$-neighborhood of each touch inner core point.
On the contrary, BLOCK-DBSCAN-FM benefits from applying a cover tree index, thus small $\epsilon$ has limited impact on its runtime.
Another factor is whether the density of clusters are similar or variable.
For a data set with variable density, the number of inner core blocks will be relatively large for the dense clusters; this makes BLOCK-DBSCAN slower.
For the SS-varden-7D data set as an example, with $\epsilon = 2000$, BLOCK-DBSCAN took 14.4834 seconds and discovered 5096 clusters.
BLOCK-DBSCAN-FM took 5.0483 seconds to produce 1563 clusters like other exact DBSCAN algorithms, which is around 200\% faster than BLOCK-DBSCAN.
However, the superiority disappeared when $\epsilon = 4500$; BLOCK-DBSCAN took 6.6477 seconds and discovered 1117 clusters, while
BLOCK-DBSCAN-FM took 6.5353 seconds to produce 15 clusters.
The superiority also disappeared in data set with similar density clusters.
For the SS-simden-7D data set, with $\epsilon = 2000$, BLOCK-DBSCAN took 5.7127 seconds and discovered 166 clusters, while BLOCK-DBSCAN-FM took 7.2591 seconds to produce 15 clusters.

For GriT-DBSCAN, GriT-DBSCAN-LDF, gan-DBSCAN, and appr-DBSCAN, they tend to perform better as $\epsilon$ increases since there are fewer non-empty grids, which speeds up the merging step.
It can be observed that GriT-DBSCAN and GriT-DBSCAN-LDF outperform gan-DBSCAN, appr-DBSCAN, and BLOCK-DBSCAN on all data sets.
In particular, GriT-DBSCAN-LDF benefits from the union-find data structure and the low density first traverse strategy for reducing redundant merging checks, significantly outperforming GriT-DBSCAN in most cases.
However, the improvement is negligible when $\epsilon$ is small.
The reason is that the number of non-empty grid is large when $\epsilon$ is small and most non-empty grids have one or two points, making the low density first traverse strategy ineffective.
Alternatively, GriT-DBSCAN-LDF is more competitive with big $\epsilon$.

In addition, gan-DBSCAN and appr-DBSCAN outperform BLOCK-DBSCAN and BLOCK-DBSCAN-FM when $d\leq 4$, but the improvement diminishes as $d$ increases.
This shows that BLOCK-DBSCAN and BLOCK-DBSCAN-FM are more scalable with dimension compared to gan-DBSCAN and appr-DBSCAN.
The reason is that the number of possible neighboring grids of gan-DBSCAN and appr-DBSCAN grows exponentially with $d$, this makes the cost of finding neighboring grids expensive.
In contrast, GriT-DBSCAN and GriT-DBSCAN-LDF use the grid tree to find the non-empty neighboring grids, with complexity $O(d\cdot (\sum^{|Gs|}_{i = 1}|Nei(g_i)|)/|Gs|)$ in the best case, which alleviates the influence of $d$ on the running time to some extent.

\noindent\textbf{Influence of $MinPts$.}
The next set of experiments aims to inspect how $MinPts$ influences the running time of each algorithm.
Therefore, we fix $\epsilon$ of each data set to the default value and vary $MinPts$ from 10 to 100.
Fig. \ref{minpts} shows the effects of $MinPts$ on the performance of different algorithms.
In general, GriT-DBSCAN, GriT-DBSCAN-LDF, gan-DBSCAN, and appr-DBSCAN become slower as $MinPts$ increases.
This is because the running time to identify all core points is $O(n\cdot MinPts)$.
However, as illustrated in Fig. \ref{minpts}, the influence of $MinPts$ was limited.
For BLOCK-DBSCAN and BLOCK-DBSCAN-FM, there is a trade off between the number of points filtered and the number of inner core blocks: a large $MinPts$ filter fewer points, while the number of inner core blocks decreases with increasing $MinPts$ leading to a reduction in the cost of merging density-reachable inner core blocks.
From Fig. \ref{minpts}, the first effect dominates except PAM4D and House.
Apparently, GriT-DBSCAN and GriT-DBSCAN-LDF outperform gan-DBSCAN, appr-DBSCAN, BLOCK-DBSCAN, and BLOCK-DBSCAN-FM on all data sets.

\noindent\textbf{Scalability with $n$.}
In the last set of experiments, we investigate the scalability of each algorithm with $n$ using the synthetic data sets.
To this end, we vary the number of points from 100k to 10m.
Other parameters are given their default values. The resulting running times are presented against $n$ in Fig. \ref{n}.
It can be seen that GriT-DBSCAN-LDF consistently outperforms all other algorithms.
Once again, BLOCK-DBSCAN and BLOCK-DBSCAN-FM become competitive as $d$ increases, and BLOCK-DBSCAN is considerably slower than BLOCK-DBSCAN-FM on data sets with variable density clusters, which confirm our analysis in the first set of experiments.

\subsection{Experiments for $d = 2$}
Next, we perform a set of experiments to compare the efficiency of our algorithms with existing algorithms in $d=2$.
In particular, we compare the performance of GriT-DBSCAN, GriT-DBSCAN-LDF, and BLOCK-DBSCAN-FM, to BLOCK-DBSCAN and Wavefront \cite{gan2017hardness}.
For Wavefront, we use the binary code which is written in C++ \cite{binaryCode}.

\noindent\textbf{Influence of $\epsilon$.}
\begin{figure*}[!t]
\centering
\includegraphics[width=5in]{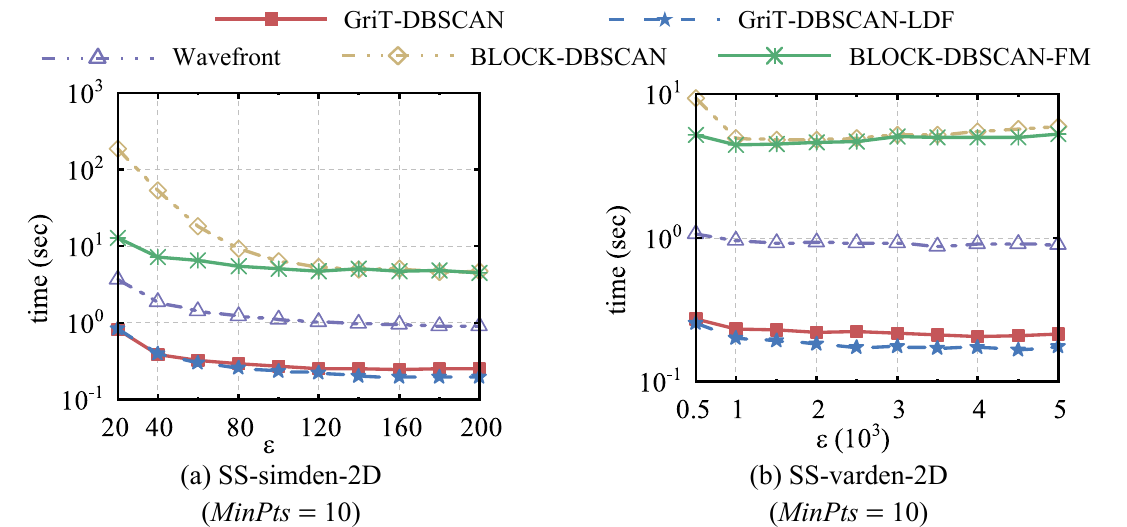}
\caption{Running time vs. $\epsilon$.}
\label{epsilon-2d}
\end{figure*}
We compare the performance of the five algorithms by varying $\epsilon$.
As shown in Fig. \ref{epsilon-2d}, the performance of the five algorithms improves as $\epsilon$ grows.
GriT-DBSCAN and GriT-DBSCAN-LDF outperform the other three algorithms in all cases, while Wavefront is comparable to BLOCK-DBSCAN and BLOCK-DBSCAN-FM.

\noindent\textbf{Influence of $MinPts$.}
\begin{figure*}[!t]
\centering
\includegraphics[width=5in]{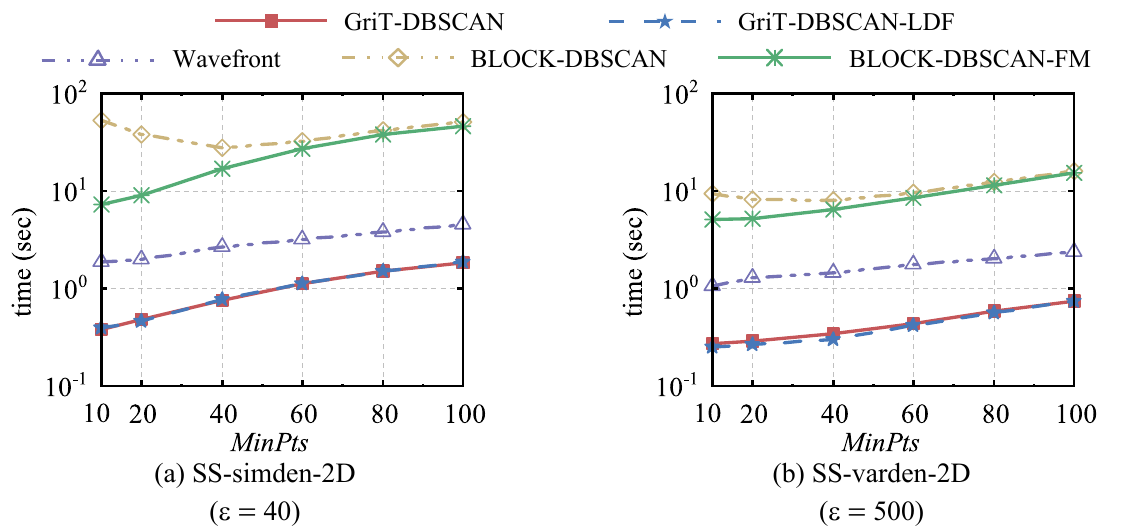}
\caption{Running time vs. $MinPts$.}
\label{minpts-2d}
\end{figure*}
We investigate the influence of $MinPts$ on the running time of each algorithm.
Fig. \ref{minpts-2d} shows the results.
The relative superiorities of all algorithms remain unchanged.
In addition, the performance of BLOCK-DBSCAN is consistent with the observations we made in the experiments for $d\geq 3$.

\noindent\textbf{Scalability with $n$.}
\begin{figure*}[!t]
\centering
\includegraphics[width=5in]{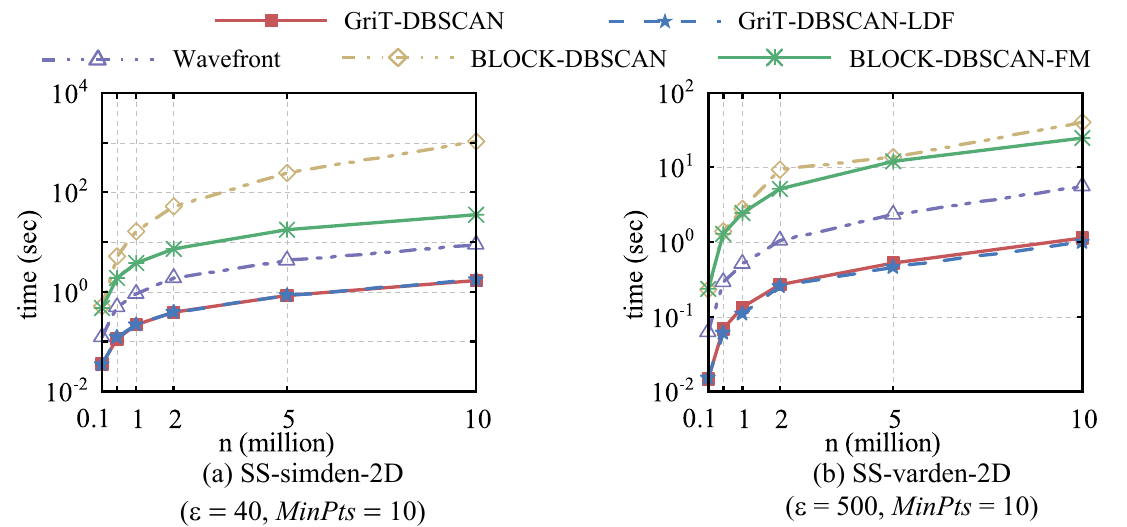}
\caption{Running time vs. $n$.}
\label{n-2d}
\end{figure*}
We vary the number of points in the data sets to examine how each algorithm scales with $n$.
The results are shown in Fig. \ref{n-2d}.
It can be seen that the relative superiorities of all algorithms remain the same.
However, in this set of experiments, BLOCK-DBSCAN is considerably slower than BLOCK-DBSCAN-FM on SS-simden-2D.
The reason is that a small $\epsilon$ leads to a large number of inner core blocks, which makes the cost of forming clusters expensive.
However, BLOCK-DBSCAN-FM uses a cover tree index, thus a small $\epsilon$ has little effect on its performance.

\subsection{Efficiency of Grid Tree}
\begin{figure*}[!t]
\centering
\includegraphics[width=6.5in]{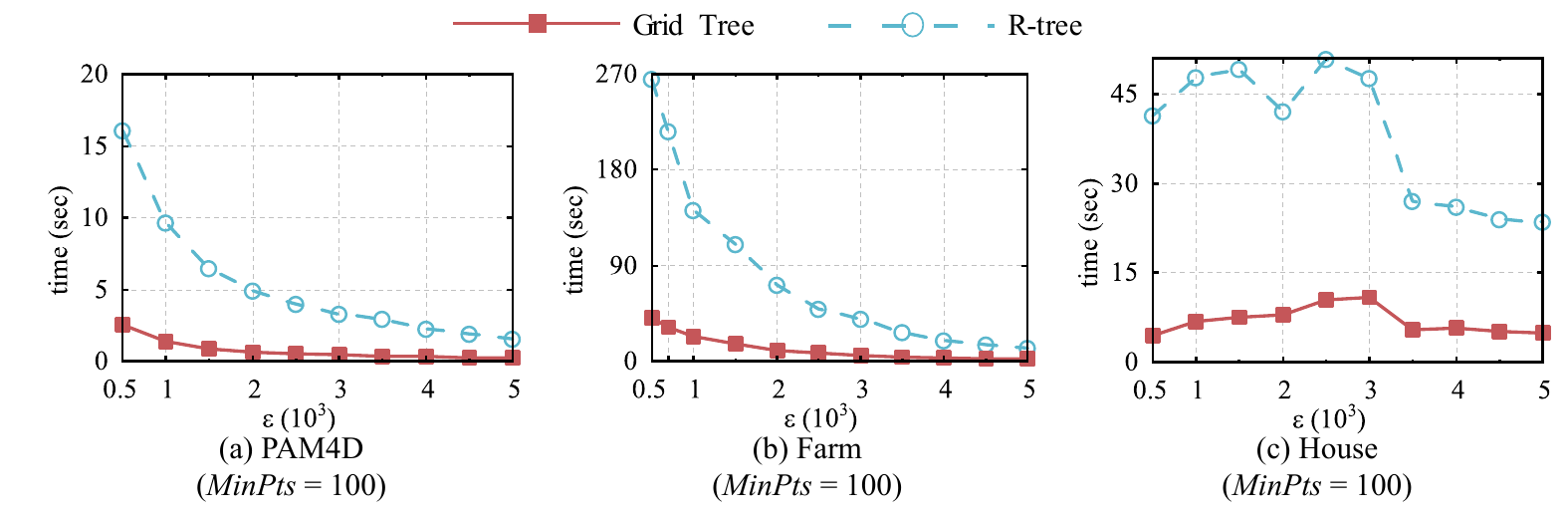}
\caption{The efficiency of grid tree.}
\label{gt}
\end{figure*}
To demonstrate the efficiency of grid tree, experiments are conducted on the three real-world data sets.
In this set of experiments, we compare the running time of grid tree and R-tree as neighboring grid query techniques by varying $\epsilon$ from 500 to 5000 and fixing $MinPts$ to the default value.
The results are shown in Fig. \ref{gt}.
In general, following an increase in $\epsilon$, both grid tree and R-tree become faster.
The reason is that the number of non-empty grids decreases with the increase of $\epsilon$, resulting in fewer neighboring grid queries.
On the other hand, the average number of non-empty neighboring grids increases with larger $\epsilon$, making the neighboring grid query more expensive, thereby increasing the running time.
Specifically, when $\epsilon$ is less than 3000, both grid tree and R-tree become slower with the increase of $\epsilon$ on the House data set.
In addition, it is obvious from Fig. \ref{gt} that grid tree significantly outperforms R-tree on all data sets.
We can conclude that grid tree clearly speeds up the neighboring grid query.

\section{Conclusions} \label{conclusions}
In this paper, we introduce a new exact DBSCAN algorithm with complexity linear to the number of data points called GriT-DBSCAN.
The key idea of GriT-DBSCAN is to utilize the spatial relationships among points to efficiently determine whether two core grids can be merged in the merging step.
More specifically, when judging whether two core grids can be merged, the trivial points in each grid are iteratively removed through the triangle inequality and the angle information, so as to reduce unnecessary distance calculations.
In addition, we introduce a grid tree to organize non-empty grids and an algorithm using it for efficient non-empty neighboring grids queries.
We prove theoretically that GriT-DBSCAN presents excellent improvement in terms of computational efficiency.

We also obtain two variants of GriT-DBSCAN, namely, GriT-DBSCAN-LDF and BLOCK-DBSCAN-FM.
To further improve the performance of GriT-DBSCAN, GriT-DBSCAN-LDF incorporates the union-find data structure and the low density first traverse strategy to reduce redundant merging checks: once two core grids are in the same set, it is unnecessary to check whether they can be merged.
By combining Algorithm \ref{fma} with BLOCK-DBSCAN, we obtain the second variant BLOCK-DBSCAN-FM.
Due to the accuracy and efficiency of Algorithm \ref{fma}, BLOCK-DBSCAN-FM obtains accurate clustering results with almost the same running time as BLOCK-DBSCAN or even less.

We conduct extensive experiments to evaluate the performance of GriT-DBSCAN and its two variants.
The results demonstrate that our algorithms are more efficient than existing algorithms.


%

%

%

\ifCLASSOPTIONcaptionsoff
  \newpage
\fi



\bibliographystyle{IEEEtran}

\begin{thebibliography}{1}
%
\bibitem{larose2014discovering}
D.~T. Larose and C.~D. Larose, \emph{Discovering Knowledge in Data}, 2nd~ed.
  John Wiley \& Sons, 2014.

\bibitem{bhatia1998conceptual}
S.~K. Bhatia and J.~S. Deogun, ``Conceptual clustering in information
  retrieval,'' \emph{{IEEE} Trans. Syst., Man, Cybern. {B}}, vol.~28, no.~3, pp. 427--436, 1998.

\bibitem{ROSENFELD19821}
A.~Rosenfeld and A.~C. Kak, \emph{{Digital Picture Processing}}, 2nd~ed. Academic Press, 1982.

\bibitem{prml}
C.~M. Bishop, \emph{{Pattern Recognition and Machine Learning}}. Springer-Verlag, 2006.

\bibitem{jain2000statistical}
A.~K. Jain, R.~P.~W. Duin, and J.~Mao, ``Statistical pattern recognition: A
  review,'' \emph{{IEEE} Trans. Pattern Anal. Mach. Intell.}, vol.~22, no.~1, pp. 4--37, 2000.

\bibitem{ester1996density}
M.~Ester, H.-P. Kriegel, J.~Sander, and X.~Xu, ``A density-based algorithm for
  discovering clusters in large spatial databases with noise.'' in
  \emph{Proc. 2nd  ACM Int. Conf. Knowl. Discovery and Data Mining}, 1996, pp. 226--231.

\bibitem{gunawan2013faster}
A.~Gunawan and M.~de~Berg, ``A faster algorithm for {DBSCAN},'' Msc Thesis, TU Eindhoven, 2013.

\bibitem{gan2015dbscan}
J.~Gan and Y.~Tao, ``{DBSCAN revisited: Mis-claim, un-fixability, and
  approximation},'' in \emph{Proc. ACM SIGMOD Int. Conf. Manag. Data}, 2015, pp. 519--530.

\bibitem{gan2017hardness}
------, ``On the hardness and approximation of {Euclidean DBSCAN},'' \emph{ACM
  Trans. Database Syst.}, vol.~42, no.~3, pp. 1--45, 2017.

\bibitem{chen2018fast}
Y.~Chen, S.~Tang, N.~Bouguila, C.~Wang, J.~Du, and H.~Li, ``A fast clustering
  algorithm based on pruning unnecessary distance computations in DBSCAN for
  high-dimensional data,'' \emph{Pattern Recognit.}, vol.~83, pp. 375--387, 2018.

\bibitem{borah2004improved}
B.~Borah and D.~K. Bhattacharyya, ``An improved sampling-based {DBSCAN} for
  large spatial databases,'' in \emph{Proc. Int. Conf. Intell. Sens. Inf. Process.}, 2004, pp. 92--96.

\bibitem{mahran2008using}
S.~Mahran and K.~Mahar, ``Using grid for accelerating density-based
  clustering,'' in \emph{Proc. IEEE Int. Conf. Comput. Inf. Technol.}, 2008, pp. 35--40.

\bibitem{boonchoo2019grid}
T.~Boonchoo, X.~Ao, Y.~Liu, W.~Zhao, F.~Zhuang, and Q.~He, ``Grid-based
  {DBSCAN}: Indexing and inference,'' \emph{Pattern Recognit.}, vol.~90, pp.
  271--284, 2019.

\bibitem{wang2020theoretically}
Y.~Wang, Y.~Gu, and J.~Shun, ``Theoretically-efficient and practical parallel
  DBSCAN,'' in \emph{Proc. ACM SIGMOD Int. Conf. Manag. Data}, 2020, pp. 2555--2571.

\bibitem{mai2016anydbc}
S.~T. Mai, I.~Assent, and M.~Storgaard, ``{AnyDBC}: An efficient anytime
  density-based clustering algorithm for very large complex datasets,'' in
  \emph{Proc. ACM SIGKDD Int. Conf. Knowl. Discovery and Data Mining}, 2016, pp. 1025--1034.

\bibitem{mai2020incremental}
S.~Mai, J.~Jacobsen, S.~Amer-Yahia, I.~Spence, P.~Tran, I.~Assent, and Q.~V.~H.
  Nguyen, ``Incremental density-based clustering on multicore processors,''
  \emph{{IEEE} Trans. Pattern Anal. Mach. Intell.}, 2020.

\bibitem{chen2019knn}
Y.~Chen, L.~Zhou, S.~Pei, Z.~Yu, Y.~Chen, X.~Liu, J.~Du, and N.~Xiong,
  ``{KNN-BLOCK DBSCAN}: Fast clustering for large-scale data,''
  \emph{{IEEE} Trans. Syst., Man, Cybern., Syst.}, 2019.

\bibitem{muja2014scalable}
M.~Muja and D.~G. Lowe, ``Scalable nearest neighbor algorithms for high
  dimensional data,'' \emph{{IEEE} Trans. Pattern Anal. Mach. Intell.},
  vol.~36, no.~11, pp. 2227--2240, 2014.

\bibitem{CHEN2021107624}
Y.~Chen, L.~Zhou, N.~Bouguila, C.~Wang, Y.~Chen, and J.~Du, ``{BLOCK-DBSCAN}:
  Fast clustering for large scale data,'' \emph{Pattern Recognit.}, vol. 109, p. 107624, 2021.

\bibitem{zhou2000combining}
S.~Zhou, A.~Zhou, J.~Cao, J.~Wen, Y.~Fan, and Y.~Hu, ``Combining sampling
  technique with DBSCAN algorithm for clustering large spatial databases,'' in
  \emph{Proc. Pacific-Asia Conf. Knowl. Discovery and Data Mining}, 2000, pp. 169--172.

\bibitem{viswanath2006dbscan}
P.~Viswanath and R.~Pinkesh, ``$l$-{DBSCAN}: A fast hybrid density based
  clustering method,'' in \emph{Proc. 18th Int. Conf. Pattern
  Recognit.}, 2006, pp. 912--915.

\bibitem{hartigan1975clustering}
J.~A. Hartigan, \emph{Clustering Algorithms}. John Wiley \& Sons, 1975.

\bibitem{viswanath2009rough}
P.~Viswanath and V.~S. Babu, ``Rough-{DBSCAN}: A fast hybrid density based
  clustering method for large data sets,'' \emph{Pattern Recognit. Lett.},
  vol.~30, no.~16, pp. 1477--1488, 2009.

\bibitem{pawlak1991rough}
Z.~Pawlak, \emph{Rough Sets}. Springer Science \& Business Media, 1991, vol.~9.

\bibitem{kumar2016fast}
K.~M. Kumar and A.~R.~M. Reddy, ``A fast {DBSCAN} clustering algorithm by
  accelerating neighbor searching using groups method,'' \emph{Pattern
  Recognit.}, vol.~58, pp. 39--48, 2016.

\bibitem{liu2006fast}
B.~Liu, ``A fast density-based clustering algorithm for large databases,'' in
  \emph{Proc. 5th Int. Conf. Mach. Learn. and Cybern.}, 2006, pp. 996--1000.

\bibitem{lulli2016ng}
A.~Lulli, M.~Dell'Amico, P.~Michiardi, and L.~Ricci, ``NG-DBSCAN: Scalable
  density-based clustering for arbitrary data,'' \emph{Proc. VLDB
  Endow.}, vol.~10, no.~3, pp. 157--168, 2016.

\bibitem{jang2019dbscan}
J.~Jang and H.~Jiang, ``{DBSCAN++}: Towards fast and scalable density
  clustering,'' in \emph{Int. Conf. Mach. Learn.}, 2019, pp. 3019--3029.

\bibitem{gonzalez1985clustering}
T.~F. Gonzalez, ``Clustering to minimize the maximum intercluster distance,''
  \emph{Theor. Comput. Sci.}, vol.~38, pp. 293--306, 1985.

\bibitem{jang2020dbscan}
H.~Jiang, J.~Jang, and J.~Lacki, ``Faster {DBSCAN} via subsampled similarity
  queries,'' in \emph{Proc. Adv. Neural Inf. Process. Syst.},
  vol.~33, 2020, pp. 22\,407--22\,419.

\bibitem{kunth1975art}
D.~E. Knuth, \emph{{The Art of Computer Programming, Volume 3: Sorting and
  Searching}}, 3rd~ed. Addison-Wesley, 1997.

\bibitem{todhunter1863spherical}
I.~Todhunter, \emph{Spherical Trigonometry}. Macmillan, 1863.


\bibitem{uci}
D.~Dua and C.~Graff, ``{UCI} machine learning repository,'' 2017. [Online].
  Available: \url{http://archive.ics.uci.edu/ml}

\bibitem{varma2003texture}
M.~Varma and A.~Zisserman, ``Texture classification: Are filter banks
  necessary?'' in \emph{Proc. IEEE Comput. Soc. Conf. Comput. Vis. Pattern Recognit.},
  vol.~2, 2003, pp. II--691.

\bibitem{beygelzimer2006cover}
A.~Beygelzimer, S.~Kakade, and J.~Langford, ``Cover trees for nearest
  neighbor,'' in \emph{Proc. 23rd Int. Conf. Mach. Learn.}, 2006, pp. 97--104.

\bibitem{tarjan1979class}
R.~E. Tarjan, ``A class of algorithms which require nonlinear time to maintain
  disjoint sets,'' \emph{J. Comput. Syst. Sci.}, vol.~18, no.~2, pp. 110--127, 1979.

\bibitem{binaryCode}
J.~Gan, ``{APPROXIMATE DBSCAN}.'' [Online]. Available:
  \url{http://sites.google.com/view/approxdbscan}

\bibitem{block}
Y.~Chen, ``{BLOCK-DBSCAN}.'' [Online]. Available:
  \url{https://github.com/XFastDataLab/BLOCK-DBSCAN}


\end{thebibliography}
%

%

%





\end{document}